\newtheorem{experiment}{Experiment}
\newcommand{\nint}{\mathbb{N}_0}
\newcommand{\cln}{\text{:\;}}
\begin{document}

\mainmatter

\title{Experimental Evaluation of\\Modified Decomposition Algorithm for\\Maximum Weight Bipartite Matching\thanks{A preliminary version of this paper has been presented in the $11^{\textit{th}}$ International Conference on Theory and Applications of Models of Computation (TAMC 2014)~\cite{das14}. The current expanded version includes a better bound of the parameter $W'$ and the experimental evaluation of the theoretical claims made in previous version.}}

\titlerunning{Experimental Evaluation of Modified Decomposition Algo.\ for MWBM}

%\author{Shibsankar Das\thanks{Corresponding author}\inst{1,2} \and Rahul Kadyan\inst{2}}
\author{Shibsankar Das}

\institute{Department of Mathematics\\
Institute of Science, Banaras Hindu University, %\\
Varanasi - 221\ 005, India.\\
%\email{reach\_shibsankardas@yahoo.com\\shibsankar@bhu.ac.in}
\email{reach\_shibsankardas@yahoo.com, shibsankar@bhu.ac.in}
}

%\institute{%Department of Mathematics\\
%Institute of Science, Banaras Hindu University, %\\
%Varanasi - 221\ 005, India.\\
%%\email{reach\_shibsankardas@yahoo.com\\shibsankar@bhu.ac.in}
%\and
%%Department of Mathematics\\
%Indian Institute of Technology Guwahati, %\\
%Guwahati - 781\ 039, India.\\
%\email{reach\_shibsankardas@yahoo.com, shibsankar@bhu.ac.in\\mail@rahulkadyan.com}
%}
\date{\today}

\maketitle              % typeset the title of the contribution

\begin{abstract}
Let $G$ be an undirected bipartite graph with 
%non-negative 
positive integer weights
on the edges. We refine the existing decomposition
theorem originally proposed by Kao et al., for computing
maximum weight bipartite matching. We apply it to design an efficient version of the decomposition algorithm to compute the weight of
a maximum weight bipartite matching of $G$ in $O(\sqrt{|V|}W'/k(|V|,W'/N))$-time
by employing an algorithm designed by Feder and Motwani as a subroutine, 
where $|V|$ and $N$  denote the number of nodes and
the maximum edge weight of $G$, respectively and $k(x,y)=\log x /\log(x^2/y)$.
The parameter $W'$ is smaller than the total edge weight $W,$ 
essentially when the largest edge weight differs by more than one from the second largest edge weight in the current working graph in any decomposition step of the algorithm. In best case $W'=O(|E|)$ where $|E|$ be the number of edges of $G$ and 
in worst case $W'=W,$ that is, $|E| \leq W' \leq W.$
%The complexity turns out to be $o(\sqrt{n}W/k(n,W/{m'}))$, in case for the bipartite graph
%
%Moreover  
In addition, we talk about a scaling property of the algorithm and research a better bound of the parameter $W'$. 
An experimental evaluation on randomly generated data shows that the proposed improvement is significant in general.

\keywords{Graph algorithm, maximum weight bipartite matching, graph decomposition, minimum weight vertex cover, Combinatorial optimization}
\end{abstract}

%*****************************************************
%%%%%%%%%%%%%%%%%%%%%%%%%%%%%%%%%%%%%%
\section{Introduction}
Let $G=(V = V_1 \cup V_2, E, \textit{Wt})$
be an undirected, weighted bipartite graph 
%without isolated nodes 
where $V_1$ and $V_2$ are two non-empty partitions of the vertex set $V$ of $G$, and $E$ is the edge set of $G$ with
% non-negative 
positive integer weights on the edges which are given by the weight function 
%$\textit{Wt}\cln E \rightarrow \nint$.
$\textit{Wt} \cln E \rightarrow \mathbb{N} $, where $\mathbb{N}$ is the set of positive integers.  
%non-negative integers.  %\nint$.
%
%Let $G=(V, E, \textit{Wt})$ be an undirected and weighted graph with $V$ and $E$ as the set of vertices and edges, respectively, and has non-negative integer weights on the edges which are given by the weight function $\textit{Wt}\cln E \rightarrow \nint$, where $\nint$ is the set of non-negative integers.
%
%Let $W$ denotes the total weight of $G$ and 
%%The weight of the graph $G$ 
%is defined by
%%$W=\textit{Wt}(E) = \sum_{e \in E} \textit{Wt}(e)$.
%$W=\textit{Wt}(G)=  \sum_{e \in E} \textit{Wt}(e)$. 
%%We also assume that the graph does not have any isolated vertex. 
%
Throughout the paper, we use the symbols $N$ and  $W$  to denote the largest weight of any edge and  the total weight of $G$, respectively.  
The weight of the graph $G$  is defined by
%$W=\textit{Wt}(E) = \sum_{e \in E} \textit{Wt}(e)$.
$W=\textit{Wt}(G)=  \sum_{e \in E} \textit{Wt}(e)$. 
We also assume that the graph does not have any isolated vertex. 
For uniformity we treat an unweighted
graph as a weighted graph having unit weight for all edges.

%Let $G=(V,E,\textit{Wt})$ be a undirected graph. 
We use the notation $\{u,v\}$ for an edge $e \in E$ between $u \in V_1$ and $v \in V_2$, and its weight is denoted by $\textit{Wt}(e)=\textit{Wt}(u,v)$.
%If $e=\{u,v\}\in E$, 
We also say that $e=\{u,v\}$ is \textit{incident} on vertices $u$ and $v$;
%; also we say that 
and $u$ and $v$ are each \textit{incident} with $e$.
Two vertices $u,v \in V$ of $G$ are \textit{adjacent} if there exists an edge $e=\{u,v\}\in E$ of $G$ to which they are both incident. 
Two edges $e_1,e_2 \in E$ of $G$  are  \textit{adjacent} if there exists a vertex $v\in V$ to which they are both incident. 

A subset $M \subseteq E$ of edges is a \emph{matching} if no two edges of $M$
share a common vertex. A vertex $v \in V$ is said to be \emph{covered} or
\emph{matched} by the matching $M$ if it is incident with an edge of
$M$; otherwise $v$ is \emph{unmatched}~\cite{bondy82,bondy08}.

A matching $M$ of $G$ is called a \textit{maximum} (\textit{cardinality}) \textit{matching} if there
does not exist any other matching of $G$ with greater cardinality. We denote such a
matching by $\textit{mm}(G)$. The weight of a matching $M$ is defined as
$\textit{Wt}(M) = \sum_{e \in M} \textit{Wt}(e)$. A matching $M$ of $G$ is a \emph{maximum weight
matching}, denoted as $mwm(G)$, if $\textit{Wt}(M) \geq \textit{Wt}(M')$ for every other matching
$M'$ of the graph $G$.

Observe that, if $G$ is an unweighted graph then $\textit{mwm}(G)$ is a $\textit{mm}(G)$, which we write as $\textit{mwm}(G)=\textit{mm}(G)$ in short and its
weight is given by $\textit{Wt}(\textit{mwm}(G))$ $=|\textit{mm}(G)|$. Similarly, if $G$ is an
undirected and weighted graph with $\textit{Wt}(e) = c$ for all edges $e$ in $G$
and $c$ is a constant then also we have $\textit{mwm}(G)=\textit{mm}(G)$ with weight of the
matching as $\textit{Wt}(\textit{mwm}(G))=c*|\textit{mm}(G)|$.
%
%Let $G=(V = V_1 \cup V_2, E, \textit{Wt})$ be an undirected, weighted bipartite graph and without isolated nodes and having $V_1$ and $V_2$ as partition of vertex set $V$.
% 
%It is assumed that the reader is familiar with the bipartite graph terminology.
%
\subsection{Our Contribution}
In~\cite{kao99,kao02}, Kao et al.\ proposed a decomposition theorem and algorithm for computing weight of a Maximum Weight Bipartite Matching (MWBM) of the bipartite graph $G$.
Our contribution in this paper is a revised version of the existing decomposition theorem
%, as described in \cite{kao99,kao02} 
and use it efficiently to design an improved version of the decomposition algorithm to estimate the weight of a MWBM of $G$ in time
$O(\sqrt{|V|}W'/k(|V|,W'/{N}))$ by taking algorithm designed by Feder and
Motwani~\cite{feder95} as base algorithm, where
$k(x,y)=\log x /\log(x^2/y)$. 
%$m'$ $(\leq |E|)$ is the number of distinct edge weights of $G$.
%The parameter $W'$ is smaller than the total weight $W$ of $G$,
%when the largest edge weight differs by more than one from the
%second largest edge weight in the current working graph during
%decomposition in any iteration of the algorithm. 
%Moreover, $m \leq W' \leq W$.

This algorithm bridges a gap between the best known time complexity of computing a Maximum Cardinality Matching (MCM) and that of computing a MWBM of a bipartite graph.
In best case, computation of weight of a MWBM takes 
$O(\sqrt{|V|}|E|/k(|V|,|E|))$ time which is the same as the complexity of the Feder and Motwani's algorithm~\cite{feder95} for computing MCM of unweighted bipartite graph;  whereas
in worst case it takes $O(\sqrt{|V|}W/k(|V|,W/{N}))$, i.e.,\ $|E| \leq W' \leq W$. 
Further, we provide an interesting scaling property of the algorithm and a better bound of the parameter $W'$.
%
%bound of the parameter $W'$ as $|E| \leq W' \leq \frac{W}{ \textit{GCD}(w_1,w_2,\ldots,w_{|E|})} \leq W$ where $\textit{GCD}(w_1,w_2,\ldots,w_{|E|})$ denotes the GCD of the positive edges weights $\{w_1,w_2,\ldots,w_{|E|}\}$ of the weighted bipartite graph.
%
However, it %is very difficult and 
seems to be a challenging problem to get rid of $W$ or $N$ from the complexity.
%\footnote{Because $W=O(mN).$}.

The modified algorithm works well for general $W,$ but is best known for
$W'=o(|E| \log(|V| N))$. We also design a revised algorithm to construct
minimum weight cover of a bipartite graph in time  
$O(\sqrt{|V|}W'/k(|V|,W'/{N}))$
to identify the edges involved in maximum weight bipartite matching.
It is also possible to use other algorithms as a subroutine,
for example, algorithms given by Hopcroft and Karp \cite{hopcroft73} and
Alt et al.~\cite{alt91} in which case the running times of our algorithm will be $O(\sqrt{|V|}W')$ and 
$O((|V|/ \log |V|)^{1/2}W')$, respectively.
An experimental evaluation on randomly generated bipartite
graphs shows that the proposed improvement is significant in general.
%
% 1 Introduction, 2 Related work, 3 Modified Decomposition Theorem,
% 4 Complexity of the Algorithm, 5 Compute Minimum weight cover, 6 Conclusions
\subsection{Roadmap}
%The rest of the paper is organized as follows. 
In Section~\ref{mwbm:Survey},
we give a detailed summary of existing maximum matching algorithms and
their complexities for unweighted and weighted bipartite graphs.
Section~\ref{Decomposition} describes modified decomposition theorem and an
algorithm to compute the weight of a MWBM. The complexity analysis of the
algorithm is discussed in Section~\ref{mwbm:Complexity_analysis}. The algorithm to %construct
compute minimum weight cover of a bipartite graph is given in Section~\ref{Find_MWC}, which is used to find the edges of a MWBM. 
Section~\ref{mwbm:Experiments} provides the experimental comparisons between the modified algorithm and Kao et al.'s algorithm for randomly generated bipartite graphs.
We summarize the results in Section~\ref{mwbm:Conclusion}.

%\todo{Rewrite the complexity part as same way as Holub.}
%
%----------------------------------------------------------------
%
\section[Survey of Maximum Matching in Bipartite Graph]{Survey of Maximum Matching in Bipartite Graph}
\label{mwbm:Survey}
%\section{Related Work} \label{Existing_algos}
The problem of computing maximum matching in a given graph is one of
the fundamental algorithmic problem that has played an important role
in the development of combinatorial optimization and algorithmics.
A survey of some of the well known existing maximum (cardinality) matching %unweighted
and maximum weight  matching algorithms for bipartite graph are summarized in
Table \ref{Table:MUWBM}  and Table \ref{Table:MWBM}, respectively.
The algorithms with best asymptotic bound are indicated by ``$*$''
in these tables. A more detailed and technical discussion of the algorithms can be found
in textbooks~\cite{korte07,schrijver03,douglas00}.

%------------------------------------------
\subsection{Maximum Cardinality Matching}
For unweighted bipartite graphs, Hopcroft-Karp \cite{hopcroft73}
algorithm, which is based on augmenting path technique, offers
the best known performance for finding maximum matching in time
$O(|E|\sqrt{|V|})$. In case of dense unweighted bipartite graphs, that is with
$|E|=\Theta (|V|^2)$, slightly better algorithms exist.
An algorithm by Alt et al.~\cite{alt91} obtains a maximum matching in
$O(|V|^{1.5}\sqrt{ |E|/\log |V|})$ time. In case of $|E|=\Theta (|V|^2)$, this
becomes $O(|E|{\sqrt{|V|/\log |V|}})$ and is also 
${\sqrt{\log |V|}}$-factor
faster than Hopcroft-Karp algorithm. This speed up is obtained by an
application of the fast adjacency matrix scanning technique of Cheriyan,
Hagerup and Mehlhorn~\cite{cheriyan90}. The algorithm proposed by Feder-Motwani~\cite{feder95} has the time complexity 
$O(|E|\sqrt{|V|}/k(|V|,|E|))$, where
$k(x,y)={\log x} /\log(x^2/y)$.
\vfill
%=====================================
\begin{table*}[htpb]%[!(all this/the rest if it does not look that good)h(here)b(bottom)p(on a special float page)] [tbp]
\centering
%\begin{center}
\caption{Complexity survey of maximum unweighted bipartite matching algorithms.}
\label{Table:MUWBM}
\begin{tabular}{|l|l|l|} 	% c-center, l-left, r-right
\hline
\multicolumn{1}{|c|}{\bf Year}	& \multicolumn{1}{|c|}{\bf Author(s)}	& \multicolumn{1}{|c|}{\bf Complexity}\\
\hline
1973 $*$ & Hopcroft and Karp \cite{hopcroft73}  		& $O(|E|\sqrt{|V|})$	\\
\hline
1991 & Alt, Blum, Mehlhorn and Paul \cite{alt91} 	& $O(|V|^{1.5}\sqrt{ |E|/\log |V|})$ \\
\hline
1995 $*$ &Feder and Motwani \cite{feder95}	  		& $O(|E|\sqrt{|V|}/k(|V|,|E|))$ \\
\hline
\end{tabular}
%\end{center}
\end{table*}
%
%%=====================================
%\begin{table*}[htpb]%[!(all this/the rest if it does not look that good)h(here)b(bottom)p(on a special float page)] [tbp]
%\centering
%%\begin{center}
%\caption{Complexity survey of maximum unweighted bipartite matching algorithms.}
%\label{Table:MUWBM}
%\begin{tabular}{|l|l|l|} 	% c-center, l-left, r-right
%\hline
%\multicolumn{1}{|c|}{\bf Year}	& \multicolumn{1}{|c|}{\bf Author(s)}	& \multicolumn{1}{|c|}{\bf Complexity}\\
%\hline
%1973 $*$ & Hopcroft and Karp \cite{hopcroft73}  		& $O(m\sqrt{n})$	\\
%\hline
%1991 & Alt, Blum, Mehlhorn and Paul \cite{alt91} 	& $O(n^{1.5}\sqrt{ m/\log n})$ \\
%\hline
%1995 $*$ &Feder and Motwani \cite{feder95}	  		& $O(m\sqrt{n}/k(n,m))$ \\
%\hline
%\end{tabular}
%%\end{center}
%\end{table*}
%%=====================================

%\vfill
%\vspace*{1cm}

%\vfill
\vspace*{1cm}
%------------------------------------------
\subsection{Maximum Weight Bipartite Matching}
Several algorithms have also been proposed for computing maximum weight
bipartite matching, improving both theoretical and practical running
times. The well known Hungarian method, the first polynomial time
algorithm, was introduced by Kuhn \cite{kuhn55} and Munkres
\cite{munkres57}. Fredman and Tarjan \cite{fredman87} improved this with
running time $O(|V|(|E| + |V| \log |V|))$ for sparse graph by using Fibonacci heaps. An 
$O(|V|^{3/4} |E| \log N)$-time scaling algorithm was proposed by
Gabow \cite{gabow85} under the assumption
% referred to as integrality which requires that the numeric input parameters, such as
that edge weights are integers. A different and faster scaling algorithm was given by
Gabow and Tarjan~\cite{gabow89} with running time
$O(\sqrt{|V|} |E|\log(|V|N))$. Kao et al.~\cite{kao02} proposed an
$O(\sqrt{|V|}W/k(|V|,W/N))$-time decomposition technique
under the assumptions that weights on the edges are positive and $W=o(|E|\log(|V|N))$.
%================================================
%\begin{table*}[htpb]
%\centering
%\caption{Complexity survey of maximum weight bipartite matching algorithms.}
%\label{Table:MWBM}
%\begin{tabular}{|l|l|l|} 	% c-center, l-left, r-right
%\hline
% \multicolumn{1}{|c|}{\bf Year(s)}		&  \multicolumn{1}{|c|}{\bf Author(s)}	& \multicolumn{1}{|c|}{\bf Complexity} \\
%\hline
%1955,				& Kuhn \cite{kuhn55},				& $O(n^4)$ 		\\
%1957				& Munkres \cite{munkres57} 				& (Hungarian method)	\\
%\hline
%1960				& Iri \cite{iri60,schrijver03} 			& $O(n^2m)$	\\
%\hline
%1969				& Dinic and Kronrod \cite{dinic69,schrijver03} 			& $O(n^3)$	\\
%\hline
%1984, 1987 $*$	&Fredman and Tarjan \cite{fredman87}  	& $O(n(m+n\log n))$	\\
%\hline
%1985  			&Gabow \cite{gabow85} 					& $O(n^{3/4}m\log N)$	\\
%\hline
%1989 $*$			& Gabow and Tarjan \cite{gabow89}	 	& $O(\sqrt{n}m\log(nN))$	\\
%\hline
%1999 			& Kao, Lam, Sung and Ting \cite{kao99}		 		& $O(\sqrt{n}W)$  \\
%\hline
%2001	 $*$			& Kao, Lam, Sung and Ting \cite{kao02} 				& $O(\sqrt{n}W/k(n,W/N))$  \\
%\hline
%2014 $*$ & 		 & $O(\sqrt{n}W')$\\
%\cline{3-3}
%(This work)	& 	& $O((n/\log n)^{1/2}W')$	\\
%\cline{3-3}
%			& 	& $O(\sqrt{n}W'/k(n,W'/{m'}))$	\\
%\cline{3-3}
%%			&	& $O(\sqrt{n}W'/k(n,W'/{m'}))$ \\
%%\cline{3-3}
%\hline
%\end{tabular}
%\end{table*}
%================================================

In addition to the above exact algorithms, several randomized and approximate algorithms
are also proposed, see for example~\cite{duan10,sankowski09}.
For a tight lower bound for the weights of maximum weight matching in bipartite graph, please refer to~\cite{das16_TLB_arXiv}.
%------------------------------------------------------------

%%%%%%%%%% TABLE 2 %%%%%%%%%%
%================================================
\begin{table*}[htpb]
\centering
\caption{Complexity survey of maximum weight bipartite matching algorithms.}
\label{Table:MWBM}
\begin{tabular}{|l|l|l|} 	% c-center, l-left, r-right
\hline
 \multicolumn{1}{|c|}{\bf Year(s)}		&  \multicolumn{1}{|c|}{\bf Author(s)}	& \multicolumn{1}{|c|}{\bf Complexity} \\
\hline
1955,				& Kuhn \cite{kuhn55},				& $O(|V|^4)$ 		\\
1957				& Munkres \cite{munkres57} 				& (Hungarian method)	\\
\hline
1960				& Iri \cite{iri60,schrijver03} 			& $O(|V|^2 |E|)$	\\
\hline
1969				& Dinic and Kronrod \cite{dinic69,schrijver03} 			& $O(|V|^3)$	\\
\hline
1984, 1987 $*$	&Fredman and Tarjan \cite{fredman87}  	& $O(|V|(|E|+|V|\log |V|))$	\\
\hline
1985  			&Gabow \cite{gabow85} 					& $O(|V|^{3/4} |E| \log N)$	\\
\hline
1989 $*$			& Gabow and Tarjan \cite{gabow89}	 	& $O(\sqrt{|V|} |E| \log(|V|N))$	\\
\hline
1999 			& Kao, Lam, Sung and Ting \cite{kao99}		 		& $O(\sqrt{|V|}W)$  \\
\hline
2001	 $*$			& Kao, Lam, Sung and Ting \cite{kao02} 				& $O(\sqrt{|V|}W/k(|V|,W/N))$  \\
\hline
2014 $*$ & 		 & $O(\sqrt{|V|}W')$\\
\cline{3-3}
(This work)	& 	& $O((|V|/\log |V|)^{1/2}W')$	\\
\cline{3-3}
			& 	& $O(\sqrt{|V|}W'/k(|V|,W'/{N}))$	\\
\cline{3-3}
%			&	& $O(\sqrt{n}W'/k(n,W'/{m'}))$ \\
%\cline{3-3}
\hline
\end{tabular}
\end{table*}
%================================================

%\vfill
%------------------------------------------------------------
%\section[Refined Decomposition Technique to Compute Weight of a MWBM]{Refined Decomposition Technique to\\Compute Weight of a MWBM} 
\section[Refined Decomposition Theorem for MWBM]
{Refined Decomposition Theorem for Maximum Weight Bipartite Matching}
\label{Decomposition}
We now propose a modified decomposition theorem which is a generalization of the existing decomposition theorem originally proposed by Kao et al.~\cite{kao02,kao99}
and use it to develop a revised  version of the decomposition algorithm to decrease the
number of iterations and speed up the computation of the weight of a MWBM.
Let $G = (V = V_1 \cup V_2 , E, \textit{Wt})$ be an undirected, weighted bipartite graph 1
%and without isolated nodes and 
having $V_1$ and $V_2$ as partition
of vertex set $V$. Further, let $E=\{e_1,e_2,\ldots,e_{|E|}\}$ be set of
edges of $G$ with weights $\textit{Wt}(e_i)=w_i$ for $1 \leq i \leq |E|$, where  $w_1, w_2, \ldots, w_{|E|}$ are
not necessarily distinct. As defined earlier, let $N$ be the maximum edge
weight, that is, for all $i \in \{1,2,\ldots, |E|\}$, $0 \leq w_i \leq N$, and
 $W=\sum_{1 \leq i \leq |E|} w_i $ be the total weight of  $G$.
 % sum of weights of all edges.

Our algorithm considers several intermediate graphs with  lighter edge weights. During this process it is possible that weights of some of the
edges may 
%be zero. 
become zero. 
An edge $e \in E$ is said to be {\it active} if
its weight $\textit{Wt}(e)>0$, otherwise it is said to be {\it inactive}, that is when $\textit{Wt}(e)=0$. Let there be $m' ~(\leq |E|)$ distinct edge weights in current
working graph where $w_1 < w_2 < \cdots < w_{m' -1} < w_{m'}$.
% be distinct edge weights in current working graph.
We denote the first two distinct maximum edge weights in current
working graph by $H_1$ and $H_2 ~(< H_1)$, respectively.
Assign $H_2 = 0$ in case $m' = 1$.

We first build two new graphs referred to as $G_h$ and $G_h^\Delta$ from
a given weighted bipartite graph $G$. For any integer $h \in [1,N]$, we decompose the %weighted bipartite 
graph $G$ into two lighter weighted bipartite graph $G_h$ and
$G_h^\Delta$ as proposed by Kao et al.~\cite{kao99,kao02}. %Before we describe their construction, we describe 
A minimum weight cover is a dual of maximum weight matching \cite{kao02}. A \emph{cover}
of $G$ is a function $C \cln V_1 \cup V_2 \rightarrow \nint$ such that
$C(v_1)+C(v_2) \geq \textit{Wt}(v_1,v_2) ~~ \forall v_1 \in V_1 \mbox{ and } v_2 \in V_2$. Let $\textit{Wt}(C)= \sum_{x\in V_1 \cup V_2} C(x)$. A cover $C$ is \emph{minimum weight cover} if $\textit{Wt}(C)$ is minimum.
\begin{description}

\item[Formation of $\boldsymbol{G_h}$ from $\boldsymbol G$:] The graph
$G_h$ is formed by including those edges $\{u,v\}$ of $G$ whose weights
 $\textit{Wt}(u,v)$ lie in the range $[N-h+1, N]$. Each edge $\{u,v\}$ in graph
$G_h$ is assigned weight $\textit{Wt}(u,v)-(N-h)$. For illustration, $G_1$ is
constructed by the maximum weight edges of $G$ and assigned unit weight
to each edge.

\item[Formation of $\boldsymbol{G_h^\Delta}$ from $\boldsymbol G$:] Let
$C_h$ be the minimum weight cover of $G_h$. The graph $G_h^\Delta$ is
formed by including every edge $\{u,v\}$ of $G$ whose weight satisfies the condition 
$$\textit{Wt}(u,v)-C_h(u)-C_h(v) > 0.$$ 
The weight assigned to such an
edge is $\textit{Wt}(u,v)-C_h(u)-C_h(v)$.
\end{description}

%\begin{theorem}[The Decomposition Theorem \cite{kao02}]
\begin{theorem}[The Decomposition Theorem \cite{kao02}]
\label{DecompositionTh}
Let $G$ be an undirected, weighted bipartite graph.
% and without isolated nodes. 
Then
\begin{enumerate}[(a)]
\item for any integer $h\in[1,N]$, $$\textit{Wt}(\textit{mwm}(G))= \textit{Wt}(\textit{mwm}(G_h))+
\textit{Wt}(\textit{mwm}(G_h^\Delta)),$$
\item in particular (trivial), for $h=1$,
$$\textit{Wt}(\textit{mwm}(G))= \textit{Wt}(\textit{mm}(G_1))+ \textit{Wt}(\textit{mwm}(G_1^\Delta)).$$
\end{enumerate}
\end{theorem}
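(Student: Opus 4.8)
The plan is to pass from matchings to covers and exploit the linear-programming duality between them. In a bipartite graph the maximum weight matching and the minimum weight cover have equal weight: this is the weighted form of K\"onig's theorem, which follows from LP duality together with the total unimodularity of the bipartite incidence matrix, so that both the packing LP and its dual covering LP attain integral optima. I would first record $\textit{Wt}(\textit{mwm}(X)) = \textit{Wt}(C_X)$ for a minimum weight cover $C_X$, applied to each of $G$, $G_h$ and $G_h^\Delta$. With this in hand, statement (a) is equivalent to the cover identity $\textit{Wt}(C_G) = \textit{Wt}(C_h) + \textit{Wt}(C_h^\Delta)$, which I would prove by a two-sided inequality. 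Part (b) is then just the case $h=1$ of (a), together with the observation that $G_1$ carries unit weights on all its edges, so $\textit{mwm}(G_1) = \textit{mm}(G_1)$.

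For the inequality $\textit{Wt}(\textit{mwm}(G)) \leq \textit{Wt}(\textit{mwm}(G_h)) + \textit{Wt}(\textit{mwm}(G_h^\Delta))$ I would build an explicit cover of $G$ from the two pieces. Let $C_h$ be the minimum weight cover of $G_h$ used to construct $G_h^\Delta$, let $C_h^\Delta$ be a minimum weight cover of $G_h^\Delta$, extend both to all of $V_1 \cup V_2$ by zero, and set $C = C_h + C_h^\Delta$. I claim $C$ covers $G$: for an edge $\{u,v\}$ with $\textit{Wt}(u,v) - C_h(u) - C_h(v) > 0$ the edge survives into $G_h^\Delta$, so $C_h^\Delta(u) + C_h^\Delta(v) \geq \textit{Wt}(u,v) - C_h(u) - C_h(v)$ and hence $C(u) + C(v) \geq \textit{Wt}(u,v)$; whereas for an edge with $\textit{Wt}(u,v) - C_h(u) - C_h(v) \leq 0$ already $C(u) + C(v) \geq C_h(u) + C_h(v) \geq \textit{Wt}(u,v)$. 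Since $\textit{Wt}(C) = \textit{Wt}(C_h) + \textit{Wt}(C_h^\Delta)$, minimality of $C_G$ and duality give the bound.

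The reverse inequality is the substantive half. Duality alone does not combine here, because a cover of $G$ cannot in general be split into a cover of $G_h$ plus a cover of $G_h^\Delta$ (the naive split $C_G - C_h$ need not be nonnegative). Instead I would argue on the primal side, producing a single matching $M$ of $G$ whose weight is at least $\textit{Wt}(\textit{mwm}(G_h)) + \textit{Wt}(\textit{mwm}(G_h^\Delta))$. Take maximum weight matchings $M_h$ of $G_h$ and $M_h^\Delta$ of $G_h^\Delta$ and form their union as a subgraph of $G$; since each has maximum degree one, the union decomposes into alternating paths and even cycles whose edges alternate between $M_h$ and $M_h^\Delta$. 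Along each component I would select edges resolving the conflicts at shared vertices, and then control the original weight $\textit{Wt}(e)$ of a retained edge through the identities $\textit{Wt}(e) = \textit{Wt}_{G_h}(e) + (N-h)$ for $e \in M_h$ and $\textit{Wt}(e) = \textit{Wt}_{G_h^\Delta}(e) + C_h(u) + C_h(v)$ for $e = \{u,v\} \in M_h^\Delta$. Equivalently, I would invoke complementary slackness for $C_h$ in $G_h$ and for $C_h^\Delta$ in $G_h^\Delta$ to show that the retained edges are tight for the combined cover $C$ and that every vertex with $C(x) > 0$ is matched, certifying $\textit{Wt}(M) = \textit{Wt}(C)$.

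The main obstacle is precisely this reconciliation. The two matchings are optimal under different weight functions, and the constant shift $N-h$ that separates the $G_h$-weights from the original weights must be absorbed exactly when conflicts at shared vertices are broken. Concretely, the delicate point is to show that the edges retained from $M_h$ are tight for $C$ — equivalently, that $C_h^\Delta$ contributes exactly $N-h$ across each such edge — rather than merely satisfying the feasibility inequality $C_h^\Delta(u) + C_h^\Delta(v) \geq N-h$ that the easy direction already guarantees. Establishing this tightness, and hence $\textit{Wt}(M) = \textit{Wt}(C)$, is the heart of the argument; the cases $h=1$ and the degenerate situation $m'=1$ (where $H_2 = 0$) are then routine specializations.
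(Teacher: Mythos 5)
You have the right frame, and the parts you actually prove are correct: weighted K\"onig--Egerv\'ary duality, the derivation of (b) from (a) via the unit weights of $G_1$, and the easy inequality $\textit{Wt}(\textit{mwm}(G)) \leq \textit{Wt}(\textit{mwm}(G_h)) + \textit{Wt}(\textit{mwm}(G_h^\Delta))$ by checking that $C_h + C_h^\Delta$ is a feasible cover of $G$. Note, however, that the paper itself supplies no proof of this theorem: it is imported from Kao et al.~\cite{kao02}, and the only argued content in the paper is exactly the (b)-from-(a) remark. The closest in-paper artifact to your argument is Lemma~\ref{Lemma_MWC} (also quoted from~\cite{kao02} without proof), which asserts that $C = C_h + C_h^\Delta$ is a \emph{minimum} weight cover of $G$; your easy direction re-derives its feasibility half, and the minimality half is equivalent, via duality, to the hard direction you are missing.

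That hard direction is a genuine gap, and not just an omitted routine detail: the step you defer is false as you state it. Complementary slackness makes $M_h$ edges tight only for $C_h$ in $G_h$ and $M_h^\Delta$ edges tight only for $C_h^\Delta$ in $G_h^\Delta$; it gives no $C$-tightness in $G$ for retained $M_h$ edges, and that tightness can fail for legitimate optimal choices. Concretely: a $C_h$-tight edge $e = \{u,v\} \in M_h$ has residual $\textit{Wt}(u,v) - C_h(u) - C_h(v) = N-h$, so for $h < N$ it reappears in $G_h^\Delta$ with weight $N-h$, where feasibility forces only $C_h^\Delta(u) + C_h^\Delta(v) \geq N-h$, possibly strictly. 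For instance, take $V_1 = \{u_1,u_2\}$, $V_2 = \{v_1,v_2\}$ with $\textit{Wt}(u_1,v_1) = \textit{Wt}(u_2,v_1) = 10$, $\textit{Wt}(u_2,v_2) = 9$, and $h=1$: the choices $M_h = \{\{u_2,v_1\}\}$, $C_1(v_1) = 1$, $M_h^\Delta = \{\{u_1,v_1\},\{u_2,v_2\}\}$, $C_1^\Delta(v_1) = C_1^\Delta(u_2) = 9$ are all optimal with complementary slackness in their respective graphs, yet $C(u_2) + C(v_1) = 9 + 10 = 19 > 10$, so the $M_h$ edge is not $C$-tight and must be discarded entirely. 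In general the certificate matching achieving $\textit{Wt}(M) = \textit{Wt}(C)$ may need edges lying in neither $M_h$ nor $M_h^\Delta$, so "resolving conflicts along components of $M_h \cup M_h^\Delta$" cannot close the argument; what is required is an Egerv\'ary/Hungarian-style argument on the equality subgraph of $C$ (or an exchange argument comparing $M_h$ with alternative optimal matchings of $G_h$) showing every vertex with $C(x) > 0$ can be matched by $C$-tight edges --- precisely the content of Kao et al.'s proof of Lemma~\ref{Lemma_MWC}. As written, your proposal identifies where the difficulty lives but does not resolve it, so part (a) remains unproved.
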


Note that the Theorem \ref{DecompositionTh}(b) is derived from
Theorem \ref{DecompositionTh}(a), since for $h=1$, we have $$\textit{mwm}(G_1)=
\textit{mm}(G_1)$$ and 
$$\textit{Wt}(\textit{mwm}(G_1))=\textit{Wt}(\textit{mm}(G_1))=|\textit{mm}(G_1)|.$$ 
The Theorem
\ref{DecompositionTh}(b) is used recursively in the Algorithm~\ref{Algorithm0_Kao}~\cite{kao02},
%, originally proposed by Kao et al.\  in~\cite{kao02}, 
to compute the weight of a maximum weight matching of the graph $G$.
%$\textit{mwm}(G)$.
%Example Required ?.? \\

%\vfill
%\newpage

%---------------------------------
\begin{algorithm}[H]
{%\small
\caption[Kao et al.'s algorithm to compute weight of a MWBM.]{Kao et al.'s algorithm~\cite{kao02} to compute weight of a MWBM.}
\label{Algorithm0_Kao}
\begin{description}
\item[\it Input:] A weighted, undirected, complete bipartite graph $G$ with
%non-negative 
positive integer weights on the edges.
% and without isolated nodes.
\item[\it Output:] Weight of a maximum weight matching of $G$, that is, $\textit{Wt}(\textit{mwm}(G))$.
\end{description}
\begin{tabbing}
123\=\kill
{Compute-$\textsc{Mwm}(G)$}\+\\
1: \= Construct $G_1$ from $G$.\\
2:   \> Compute $\textit{mm}(G_1)$ and find a minimum weight cover $C_1$ of $G_1$.\\
3:   \> Construct $G_1^\Delta$ from $G$ and $C_1$.\\
4:   \> \textbf{if} $G_1^\Delta$ is empty,\\
5:\>~~~ \textbf{then} \textbf{return} $\textit{Wt}(\textit{mm}(G_1))$;\\
6:\> \textbf{else} \textbf{return} $\textit{Wt}(\textit{mm}(G_1))$+Compute-$\textsc{Mwm}(G_1^\Delta)$.
\end{tabbing}
}
\end{algorithm}
%---------------------------------

\begin{remark}
A graph $G$ may not have all edge weights distinct. Consider the set of distinct edge
weights of $G$. The Algorithm~\ref{Algorithm0_Kao} works efficiently only when the largest
edge weight differs by exactly one from the second largest edge weight of the current graph during an invocation of Theorem~\ref{DecompositionTh}(b) in each iteration.  
%an iteration.%, otherwise not.
\end{remark}

%For worst case example, consider $G$ with all edge weights $c~ (\gg0)$.
%For this case, the algorithm takes $c$ iterations to compute the weight
%of $\textit{mwm}(G)$. But it is possible to compute the same in just one iteration.
% Note: The same thing is said in Fig. 2. No need to repeat it.

\begin{remark}
Observe that for arbitrary $h \in [1,N]$, $\textit{mwm}(G_h)$ need not be equal
to $\textit{mm}(G_h)$, that is, we cannot always conclude that $\textit{mwm}(G_h)=\textit{mm}(G_h)$.
\end{remark}

One of our objectives is to investigate those values of $h$ for which
$\textit{mwm}(G_h)$ is equal to $\textit{mm}(G_h)$ apart from
%. The algorithm given by Kao {\it et al}.\ in \cite{kao02} does not consider this and simply uses
the trivial value of $h$
as 1 in each iteration of the Algorithm~\ref{Algorithm0_Kao} to generate $G_h$ having all its
edge weights as 1.

In order to get the speed up whenever possible, by decreasing the number of iterations whenever possible, we revise
the Theorem~\ref{DecompositionTh}(b) and propose Theorem~\ref{DecompositionTh_Modified} which gives a
domain of $h \in [1,N]$ where $\textit{mwm}(G_h)=\textit{mm}(G_h)$ and as a consequence of that 
we can write 
$$\textit{Wt}(\textit{mwm}(G_h))=\textit{Wt}(\textit{mm}(G_h))=h*|\textit{mm}(G_h)|.$$ 
It works for $h=1$
and performs well especially when the largest edge weight differs  by
more than one from the second largest edge weight in the current graph
in a decomposition step during an iteration.

%\begin{theorem}[The Modified Decomposition Theorem]
\begin{theorem}[The Modified Decomposition Theorem]
\label{DecompositionTh_Modified}
The following equalities hold for any integer $h \in [1,H_1-H_2]$ where
$H_1$ and $H_2 ~(<H_1)$ are the first two distinct maximum edge weights of graph
$G$, respectively. We assign $H_2 = 0$ in case all edge weights are
equal.
\begin{enumerate}[(a)]
\item \label{thm:dt:a} $\textit{mwm}(G_h)  = \textit{mm}(G_h)$,

\item
$\textit{Wt}(\textit{mwm}(G)) = h * \textit{Wt}(\textit{mm}(G_h))+ \textit{Wt}(\textit{mwm}(G_h^\Delta)).
%s			& = h * |\textit{mm}(G_h)| + \textit{Wt}(\textit{mwm}(G_h^\Delta)).
$
\end{enumerate}
\end{theorem}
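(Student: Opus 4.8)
The plan is to exploit the upper bound $h \leq H_1 - H_2$ to pin down exactly which edges of $G$ survive into $G_h$ and with what weight; once that structural fact is in hand, both parts follow almost immediately.

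First I would analyse the structure of $G_h$. By construction $G_h$ retains precisely those edges of $G$ whose original weight lies in $[N-h+1,N]$, reweighted by subtracting $N-h$. Since $N=H_1$ and $h \leq H_1 - H_2$, the lower endpoint satisfies $N-h+1 = H_1 - h + 1 \geq H_2 + 1$. Because $H_2$ is the second largest \emph{distinct} edge weight, no edge of $G$ has weight strictly between $H_2$ and $H_1$; hence the interval $[N-h+1,N]$ captures exactly the edges of weight $H_1$. Each such edge receives weight $H_1 - (N-h) = h$. Thus the key structural fact is that \emph{every} edge of $G_h$ carries the common weight $h$. (Note also that the domain is legitimate, since $H_1>H_2\geq 0$ forces $1 \leq H_1-H_2 \leq H_1 = N$.)

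Part (a) is then immediate: since all edges of $G_h$ share the weight $h$, the constant-weight observation recorded in the introduction (a graph with $\textit{Wt}(e)=c$ for every edge satisfies $\textit{mwm}=\textit{mm}$) gives $\textit{mwm}(G_h)=\textit{mm}(G_h)$, and moreover $\textit{Wt}(\textit{mwm}(G_h)) = h \cdot |\textit{mm}(G_h)|$. In effect the weighted matching problem on $G_h$ collapses to a maximum cardinality matching problem, which is exactly what later makes the faster subroutine applicable. For part (b) I would invoke the original Decomposition Theorem, Theorem~\ref{DecompositionTh}(a), which is legitimate because $h \in [1,H_1-H_2] \subseteq [1,N]$. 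It yields $\textit{Wt}(\textit{mwm}(G)) = \textit{Wt}(\textit{mwm}(G_h)) + \textit{Wt}(\textit{mwm}(G_h^\Delta))$. Substituting $\textit{Wt}(\textit{mwm}(G_h)) = h \cdot \textit{Wt}(\textit{mm}(G_h))$ from part (a), where $\textit{Wt}(\textit{mm}(G_h))$ is read as the cardinality $|\textit{mm}(G_h)|$ in the same sense as Theorem~\ref{DecompositionTh}(b), delivers the claimed identity.

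The only real obstacle is the structural step: one must verify carefully that the reweighting threshold $N-h$ never dips to or below $H_2$, so that no second-tier edge leaks into $G_h$ and the uniform weight $h$ is genuinely shared by all surviving edges. Once the range condition $h \leq H_1 - H_2$ is correctly translated into $N-h+1 > H_2$, everything else reduces to the constant-weight remark and a direct appeal to Theorem~\ref{DecompositionTh}(a).
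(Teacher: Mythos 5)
Your proposal is correct and follows essentially the same route as the paper: both establish the structural fact that $h \leq H_1-H_2$ forces $N-h+1 \geq H_2+1$, so $G_h$ contains only the maximum-weight edges of $G$, each with uniform weight $h$, from which part (a) follows by the constant-weight observation and part (b) by substituting into Theorem~\ref{DecompositionTh}(a). Your only (cosmetic) deviation is that you verify the structural fact directly for every $h$ in the range, whereas the paper phrases the argument for the extreme value $h=H_1-H_2$ and relegates the general case to a footnote; your explicit note that $\textit{Wt}(\textit{mm}(G_h))$ must be read as the cardinality $|\textit{mm}(G_h)|$ matches the paper's intended interpretation.
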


\begin{proof}
The proof of the above statements are based on the construction of new
graphs $G_h$ and $G_h^\Delta$ from $G$ and Theorem \ref{DecompositionTh}(a).
\begin{enumerate}[(a)]
\item To prove that for any integer $h$ where $1 \leq h \leq H_1-H_2$,
$\textit{mwm}(G_h)=\textit{mm}(G_h)$ holds true, it is enough to prove the
same for the maximum value\footnote{
For illustration,
consider $h=c$ where $1 \leq c \leq H_1-H_2$. Then as per the
formation of $G_h$ from $G$, $G_c$ is built by choosing those edges
of $G$ that have weight $\textit{Wt}(u,v) \in [N-(c-1),N]$. Since, $c-1 \geq 0$
and $N \in [N-(c-1),N]$ for any $c \in [1, H_1-H_2]$, $G_c$ has only
the heaviest edges of $G$. For optimization, choose $h=H_1-H_2$,
the maximum possible value of $h$.} of
$h$, that is, for $h=H_1-H_2$. As specified earlier, the
construction of $G_h$ is done by choosing those edges $\{u,v\}$ of $G$ that have weight 
$$\textit{Wt}(u,v) \in [N-h+1,N]=[H_1-(H_1-H_2)+1, H_1]=[H_2+1,H_1].$$
Since $H_1 \in [H_2+1, H_1]$, $G_h$ has only the heaviest edges of
$G$ and each such edge is assigned the same weight.
Thus, $\textit{mwm}(G_h)=\textit{mm}(G_h)$ for $h=H_1-H_2$.

\item Observe that $h \in[1,H_1-H_2]$ and $[1,H_1-H_2]\subseteq [1,N]$.
So, by using Theorem \ref{DecompositionTh}(a) we have, $\forall h \in[1,H_1-H_2]$,
$$\textit{Wt}(\textit{mwm}(G))= \textit{Wt}(\textit{mwm}(G_h))+\textit{Wt}(\textit{mwm}(G_h^\Delta)).$$
Also by using the Theorem~\ref{DecompositionTh_Modified}(\ref{thm:dt:a}),
$\textit{mwm}(G_h)=\textit{mm}(G_h)$ for all $h \in [1,H_1-H_2]$.
Weight of each edge\footnote{Only maximum weight edges of $G$ are %participating 
included in $G_h$.} %of $G$ 
$\{u,v\}$ in $G_h$ is exactly $\textit{Wt}(u,v)-(N-h) =
H_1-(H_1-h) = h$. Therefore,
\[\textit{Wt}(\textit{mwm}(G_h))=h * \textit{Wt}(\textit{mm}(G_h))= h * |\textit{mm}(G_h)|.\]
Hence for any integer $h \in [1,H_1-H_2]$, 
%\\[10pt]
%\begin{array}[t]{l@{~}l}
\begin{align*}
  \textit{Wt}(\textit{mwm}(G)) &= \textit{Wt}(\textit{mwm}(G_h))+ \textit{Wt}(\textit{mwm}(G_h^\Delta))\\[2pt]
			 & = h * \textit{Wt}(\textit{mm}(G_h))+ \textit{Wt}(\textit{mwm}(G_h^\Delta)).% \\
%			 & = h * |\textit{mm}(G_h)| + \textit{Wt}(\textit{mwm}(G_h^\Delta))
\end{align*}
%\end{array}
\end{enumerate}
This completes the proof.
\end{proof}

\begin{remark}
The equality $\textit{mwm}(G_h)=\textit{mm}(G_h)$ in Theorem
\ref{DecompositionTh_Modified}(\ref{thm:dt:a}) is not true for
$h > H_1-H_2$ and $h \leq N$.
\end{remark}

To show that for any $h \in [H_1-H_2 + 1, N]$ the statement
$\textit{mwm}(G_h)=\textit{mm}(G_h)$ is not true, it is enough to show the same
essentially for $h=H_1-H_2+1$. Observe that $h = H_1-H_2+1 \geq 2$,
since $H_1 > H_2$. According to the construction of $G_h$, it is formed
by edges $\{u,v\}$ of $G$ whose weights $\textit{Wt}(u,v) \in [N-h+1,N] = [H_1-(H_1-H_2+1)+1,
H_1]=[H_2,H_1]$, that is, $G_h$ is built with the maximum weight edges and second
maximum weight edges of $G$,  because $\{H_1,H_2\}\in [H_2, H_1]$.
The weight of each %of the 
heaviest edge $\{u,v\}$ of $G$ in $G_h$
is exactly 
$$\textit{Wt}(u,v)-(N-h) = H_1-(H_1-h) = h$$ which is greater than or equal to 2 and that of each second
heaviest edge $\{u,v\}$ of $G$ in $G_h$ is exactly $$\textit{Wt}(u,v)-(N-h) = H_2-(H_1-h) =
(H_2-H_1)+h=(1-h)+h=1.$$ 
Hence $\textit{mwm}(G_h) \neq \textit{mm}(G_h)$ for such
a value of $h$.

\begin{example}
Consider the graph shown in the Figure \ref{Diagram3}(a). Let $h =
H_1-H_2+1$. So, $h=H_1-H_2+1=9-4+1=6$.
As shown in the Figure \ref{Diagram3}(b), $G_h$ is
formed by the edges $\{u,v\}$ whose weights $\textit{Wt}(u,v) \in [N-h+1,N]=[9-6+1,9] =[4,9]$ and their respective calculated weights are 6 and 1. Hence
$\textit{mwm}(G_h) \neq \textit{mm}(G_h)$.
%
%-------------------------------
%\begin{figure*}[!ht]
\begin{figure*}[htpb]
\centering
\includegraphics[angle=0,width=.8\textwidth]{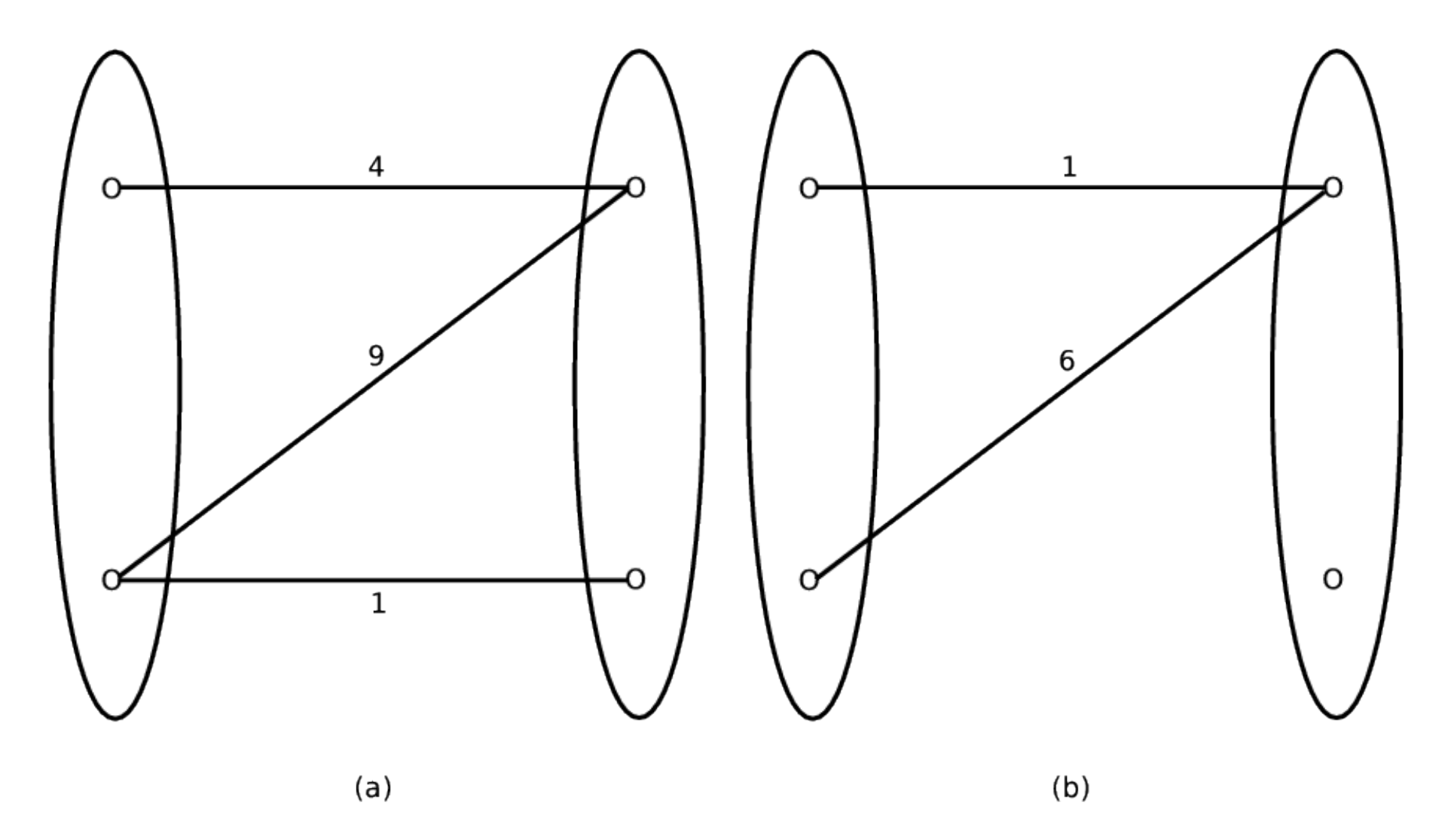}
\caption[Example for $H_1-H_2 < h \leq N$, $\textit{mwm}(G_h) \neq \textit{mm}(G_h)$.]{{\bf(a)} An undirected bipartite graph $G$ with %non-negative
positive integer weights on the edges.  {\bf(b)} Considering $h=H_1-H_2+1=6$,
$G_h$ is extracted, but $\textit{mwm}(G_h) \neq \textit{mm}(G_h)$.}
%The graph $G_h$ cannot be calculated in linear time.
\label{Diagram3}
\end{figure*}
\end{example}

We use the modified decomposition Theorem~\ref{DecompositionTh_Modified} to design a recursive Algorithm~\ref{Algorithm1} to compute the weight of a $\textit{mwm}(G)$.
%\vfill
%\newpage

\begin{algorithm}[H]
\caption{Compute weight of a maximum weight matching of $G$.}
\label{Algorithm1}
\begin{description}
\item[\it Input:] A weighted, undirected, complete bipartite graph $G$ with %non-negative
positive  integer weights on the edges.
% and without isolated nodes.
\item[\it Output:] Weight of a maximum weight matching of $G$, that is, $\textit{Wt}(\textit{mwm}(G))$.
\end{description}
\begin{tabbing}
123\=\kill
\textsc{Wt-Mwbm($G$)}\+\\
1: Assume that initially $\textit{Wt}(\textit{mwm}(G))=0$.\\
2: Find $h=H_1-H_2$ from the current working graph $G$.\\
3: Construct $G_h$ from $G$.\\
4: Compute $\textit{mm}(G_h)$.\\
5: Find minimum weight cover $C_h$ of $G_h$.\\
6: Construct $G_h^\Delta$ from $G$ and $C_h$.\\
7: \= {\bf if} $G_h^\Delta$ is empty (that is, $G_h^\Delta$ has no active edge)\\
8: \>~~~ {\bf then return}  $h * |\textit{mm}(G_h)|$;\\
9: \> {\bf else return} $h * |\textit{mm}(G_h)|$ + \textsc{Wt-Mwbm($G_h^\Delta$)}.
\end{tabbing}
\end{algorithm}

\begin{example}
Consider the bipartite graph shown in Figure \ref{Diagram1}(a). The Algorithm
\ref{Algorithm1} finds the weight of a MWBM in just two iterations, as
the algorithm is designed for the best $h$ in every invocation of \textsc{Wt-Mwbm(\,)},
whereas algorithm by Kao et al.~\cite{kao02} requires 500 iterations because it
considers $h=1$ in every invocation of  Compute-\textsc{Mwm}(\,).
%
%\begin{figure*}[!ht]
\begin{figure*}[htpb]
\centering
\includegraphics[angle=0,width=1\textwidth]{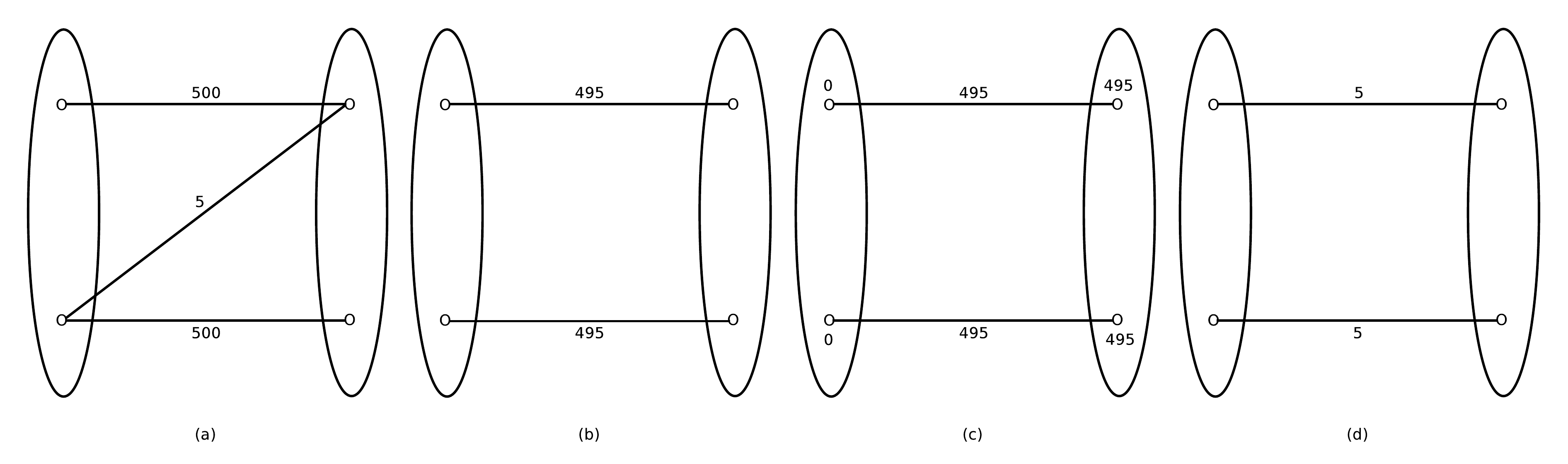}
\caption[Illustration of the modified decomposition theorem on  an undirected, weighted bipartite graph $G$.]{{\bf(a)} An undirected, weighted bipartite graph $G$ with %non-negative 
positive integer weights on the edges. In the  current graph $G$, $h=495$. {\bf(b)}
$G_h$ is extracted. {\bf(c)} $C_h$ is the weighted cover of $G_h$.
{\bf(d)} $G_h^\Delta$ is formed from $G_h$ and $C_h$. Compute 
\textsc{Wt-Mwbm$(G_h^\Delta)$}.}
\label{Diagram1}
\end{figure*}
\end{example}

Correctness of the algorithm follows from the construction of $G_h$ and
${G_h^\Delta}$ and the modified decomposition Theorem
\ref{DecompositionTh_Modified}.

%\vfill
%\newpage

\section{Complexity of the Modified Algorithm}
\label{mwbm:Complexity_analysis}
Let $G = (V = V_1 \cup V_2 , E, \textit{Wt})$ be the initial input graph and $N$ denotes the maximum edge weight of $G$, that is, for all $i \in \{1,2,\ldots, |E|\}$, $0
\leq w_i \leq N$ and $W=\sum_{1 \leq i \leq |E|} w_i $ is the total weight of $G$. Further, let $\{w_1, \ldots, w_{m'}\}$ be the set
of distinct edge weights of $G$, where $m' \leq |E|$.

Based on the constructions of $G_h$ and $G_h^\Delta$,  the modified
decomposition Theorem~\ref{DecompositionTh_Modified} and the
Algorithm~\ref{Algorithm1},
we can easily observe that in worst case the maximum number of possible iterations of \textsc{Wt-Mwbm}(\,)
is $N$, when $h=1$ in each iteration in the current working graph. %the same as the number of distinct edge weights of the initial graph $G$ which is nothing but ${m'} \; (\leq |E|)$.
% in our case. 
%In worst
%case, when all the edge weights of the initial graph $G$ are distinct then that
%leads to $|E|$ iterations in the above algorithm. 
Whereas in the best case,
all the edge weights of $G$ are the same and so we will have $h=N$ for the present decomposition. As a consequence the algorithm will terminate in the
first iteration itself.

As the complexity analysis of the Algorithm~\ref{Algorithm1} is almost
similar to that presented elsewhere~\cite{kao02}, the details are available in Appendix~\ref{AppendixB: Detailed Complexity of MWBM} (see page~\pageref{AppendixB: Detailed Complexity of MWBM}).
The algorithm takes
$O(\sqrt{|V|}W'/k(|V|,W'/{N}))$ time to compute the weight of a
$\textit{mwm}(G)$ by using the algorithm by Feder and Motwani~\cite{feder95}, as a subroutine.

Let $L_i$ consists of edges of remaining $G$ (after \,$i-1$-th iteration) whose weights reduce in $G_h^\Delta$ in $i$-th iteration. 
Also let there be $p$ iterations, $l_i=|L_i|$ for
$i=1,2,\ldots,p  \leq N$ and $h_i=H_{i1}-H_{i2}$ in the $i$-th iteration, where $H_{i1}$ and $H_{i2} \;(<H_{i1})$ are the first two distinct maximum edge weights of the remaining graph $G$ after the \,$i -1$-th iteration.

From the detailed complexity analysis we have,  $l_1h_1+l_2h_2+ \cdots + l_ph_p=W.$ Let $l_1+l_2+ \cdots + l_p=W'.$ Observe that, in worst case, if
$h_i=1$ for all $i \in [1,p]$, then $W'=\sum_{i=1}^p l_i=W.$ And in best case, if $h_1=N$, then $W'=|E|$. 
%In best case $W'=m$ and in worst case $W'=W$ that is ($m \leq W' \leq W$).
Moreover, the parameter $W'$ is smaller than $W$,
 essentially when the largest edge weight differs by more than one from the
second largest edge weight in the current working graph in  decomposition step
during at least one iteration of the algorithm. Therefore  in best case\footnote{
In best case, all the edge weights of $G$ are the same. So, the algorithm terminates
in just one iteration and hence $W'=O(|E|).$}, it
requires $O(\sqrt{|V|}|E|/k(|V|,|E|))$ time 
and in worst case $O(\sqrt{|V|}W/k(|V|,W/{N})),$  to compute weight of a maximum weight matching. That is,  $|E| \leq W' \leq W$.

%Therefore the modified decomposition Algorithm~\ref{Algorithm1} computes weight of a maximum weight bipartite  matching in  $O(\sqrt{|V|}W'/k(|V|,W'/{N}))$ where $|E| \leq W' \leq W$. 
This time complexity bridges a gap between the best known time complexity for computing a 
Maximum Cardinality Matching (MCM) of unweighted bipartite graph and that of computing a MWBM of a weighted bipartite graph.
In best case, for computation of weight of a MWBM, the Algorithm~\ref{Algorithm1} takes 
$O(\sqrt{|V|}|E|/k(|V|,|E|))$ time which is the same as the complexity of the Feder and Motwani's algorithm~\cite{feder95} for computing MCM of unweighted bipartite graph;  whereas
in worst case it (Algorithm~\ref{Algorithm1}) takes $O(\sqrt{|V|}W/k(|V|,W/{N}))$ time which is the same as the complexity of the Kao et al.'s algorithm~\cite{kao02}.
However, it is very difficult and challenging to get rid of $W$ or $N$ from the complexity.
%The parameter $W'$ is explained in section \ref{Complexity_analysis}.
This modified algorithm works well for general $W,$ but is best known for
$W'=o(|E| \log(|V|N))$.%\\
%
%Below we give complexity analysis in general. We assume that a maximum heap \cite{cormen01} is used to store the distinct edge weights along with the associated edges of $G$.

%Other Advantages
\subsection{More Advantages: Scaling Up and Down, and GCD Properties}

%The benefit 
Some other advantages of the modified decomposition algorithm %, that is, Algorithm~\ref{Algorithm1} 
is stated by the following propositions. 
%The Algorithm~\ref{Algorithm1} is independent of the total weight $W$ of $G.$
Let $G = (V, E, \textit{Wt})$ be an undirected, weighted bipartite graph, $E=\{e_1,e_2,\ldots,e_{|E|}\}$ be the set of positive integer weight
edges with weights $\textit{Wt}(e_i)=w_i >0$ (where $1 \leq i \leq |E|$), $N$ be the maximum edge weight and
 $W=\sum_{1 \leq i \leq |E|} w_i $ be the total weight of  $G$.
The modified decomposition Algorithm~\ref{Algorithm1} computes weight of a maximum weight bipartite  matching  of $G$ in 
 $O(\sqrt{|V|}W'/k(|V|,W'/{N}))$ time, where $|E| \leq W' \leq W$.
 %
%\begin{lemma}[Multiplicative Scaling Property]
\begin{proposition}[Multiplicative Scaling Up Property]
\label{Lm:M_Incremental Wt}
Let $\widehat{G}$  be a new weighted bipartite graph constructed by multiplying a large constant $\alpha \in \mathbb{N} $ to each edge weight $w_i$ 
of the initial weighted bipartite graph $G$. 
%$G = (V, E, \textit{Wt})$.
Then  for both the graphs $G$ and $\widehat{G}$, the complexity of the Algorithm~\ref{Algorithm1} remains  $O(\sqrt{|V|}W'/k(|V|,W'/{N}))$ where $|E| \leq W' \leq W$; whereas for the graph $\widehat{G}$, the complexity of the Algorithm~\ref{Algorithm0_Kao} becomes $O(\sqrt{|V|}\alpha W/k(|V|,\alpha W/{N}))$.
\end{proposition}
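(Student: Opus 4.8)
The plan is to exploit that the two algorithms react to a uniform rescaling of the edge weights in opposite ways: Algorithm~\ref{Algorithm1} picks $h=H_1-H_2$ adaptively, so its step size rescales together with the weights while the number of iterations is left untouched, whereas Algorithm~\ref{Algorithm0_Kao} is locked to $h=1$ and must therefore take about $\alpha$ times as many unit steps. Write $\widehat{w}_i=\alpha w_i$ for every edge, so that $\widehat{N}=\alpha N$, $\widehat{W}=\alpha W$, $|\widehat{E}|=|E|$, and the underlying bipartite structure is unchanged. I would first isolate a \emph{scaling claim} as the technical core and then read off both complexity bounds from it.

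\emph{Scaling claim.} The decomposition of Section~\ref{Decomposition} commutes with multiplication by $\alpha$: running Algorithm~\ref{Algorithm1} on $\widehat{G}$ produces, at every iteration, the same working graph as on $G$ but with all weights multiplied by $\alpha$. I would prove this by induction on the iteration index, the base case being $\widehat{G}=\alpha G$. For the step, let $h=H_1-H_2$ in the current copy of $G$; the first two distinct maxima of its scaled copy are $\alpha H_1$ and $\alpha H_2$, so the chosen step is $\widehat{h}=\alpha H_1-\alpha H_2=\alpha h$. The inclusion window for $(\widehat{G})_{\widehat h}$ is $[\widehat{N}-\widehat{h}+1,\widehat{N}]=[\alpha(N-h)+1,\alpha N]$; since each weight is an integer multiple of $\alpha$, an edge of weight $\alpha w$ falls in this window exactly when $w\in[N-h+1,N]$, so $(\widehat{G})_{\widehat h}$ selects precisely the edges of $G_h$ and assigns each the weight $\alpha w-\alpha(N-h)=\alpha h$. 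Hence $(\widehat{G})_{\widehat h}$ is $G_h$ with weights scaled by $\alpha$, carries the same matching, and is a maximum weight matching by Theorem~\ref{DecompositionTh_Modified}(\ref{thm:dt:a}). Because minimum weight cover is the dual of maximum weight matching, the optimum cover weight scales by $\alpha$, so I may take $\widehat{C}_{\widehat h}=\alpha\,C_h$, which is a valid minimum weight cover of $(\widehat{G})_{\widehat h}$ since its weight $\alpha\,\textit{Wt}(C_h)$ is optimal. Finally the residual threshold is scale-invariant, $\alpha w-\alpha C_h(u)-\alpha C_h(v)>0$ iff $\textit{Wt}(u,v)-C_h(u)-C_h(v)>0$, so $(\widehat{G})_{\widehat h}^{\Delta}$ retains exactly the active edges of $G_h^{\Delta}$ with weights multiplied by $\alpha$, which is the hypothesis for the next iteration.

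From the claim, the number of iterations $p$ and every edge-count $l_i$ coincide for $G$ and $\widehat{G}$, only the steps changing to $\widehat{h}_i=\alpha h_i$ (consistently, $\sum_i l_i\widehat{h}_i=\alpha\sum_i l_i h_i=\alpha W=\widehat{W}$). Hence the governing parameter $\widehat{W}'=\sum_i l_i=W'$ is invariant and the matching subroutine runs on identical graphs. Substituting into the bound of Section~\ref{mwbm:Complexity_analysis} gives $O(\sqrt{|V|}\,W'/k(|V|,W'/\widehat{N}))=O(\sqrt{|V|}\,W'/k(|V|,W'/(\alpha N)))$ for $\widehat{G}$; since $\alpha$ is a fixed constant, $\log\alpha$ is a lower-order additive term in the denominator $\log(|V|^2N/W')$ of $k$, so $k(|V|,W'/(\alpha N))=\Theta(k(|V|,W'/N))$ and the bound is the same as for $G$. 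For Algorithm~\ref{Algorithm0_Kao}, which always uses $h=1$, the governing parameter is the total weight itself, which on $\widehat{G}$ is $\widehat{W}=\alpha W$; Kao et al.'s bound then instantiates to $O(\sqrt{|V|}\,\alpha W/k(|V|,W/N))$ using $\widehat{W}/\widehat{N}=W/N$, and by the same constant absorption this equals the stated $O(\sqrt{|V|}\,\alpha W/k(|V|,\alpha W/N))$, where now the leading factor $\alpha W$ genuinely grows and cannot be absorbed.

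I expect the delicate point to be the scaling claim, specifically verifying that $\alpha C_h$ is a minimum weight cover of the scaled graph so that the residual graph $G_h^{\Delta}$ is structurally preserved (this relies on the cover subroutine being taken scale-equivariantly, which the duality argument permits), together with the integrality observation that the window $[\alpha(N-h)+1,\alpha N]$ captures exactly the images of $[N-h+1,N]$, valid because all weights are integers and $\alpha\in\mathbb{N}$. A secondary point to state explicitly is that $k(|V|,W'/(\alpha N))=\Theta(k(|V|,W'/N))$ requires $\log(|V|^2N/W')\to\infty$ to dominate the constant $\log\alpha$, i.e.\ the regime $W'=o(|V|^2N)$ in which the bound is meaningful, with the analogous remark applying to the Algorithm~\ref{Algorithm0_Kao} estimate.
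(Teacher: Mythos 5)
Your proposal is correct and follows essentially the same route as the paper's proof: both track the per-iteration quantities $l_i$ and $h_i$, observe that uniform scaling gives $\widehat{h}_i=\alpha h_i$ while the iteration count $p$ and the edge sets $L_i$ are unchanged, so that $W'=\sum_{i=1}^{p} l_i$ is invariant for Algorithm~\ref{Algorithm1} whereas the governing parameter of Algorithm~\ref{Algorithm0_Kao} becomes $\alpha W$. You actually go beyond the paper, which merely \emph{asserts} the step-by-step mirroring of the two executions, by proving it inductively (including that $\alpha C_h$ remains a minimum weight cover of the scaled $G_h$ and that the residual graph is preserved) and by justifying the absorption of $\log\alpha$ inside $k(\cdot,\cdot)$ for constant $\alpha$ --- details the paper leaves implicit.
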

\begin{proof}
As mentioned in the  detailed complexity analysis of the Algorithm~\ref{Algorithm1} (described in Appendix~\ref{AppendixB: Detailed Complexity of MWBM}, page~\pageref{AppendixB: Detailed Complexity of MWBM}), let $L_i$ consists of edges of remaining graph $G$ (left after \,$i-1$-th iteration), whose weights
reduce in $G_h^\Delta$ in the $i$-th iteration of \textsc{Wt-Mwbm}(\,). 
Assume that there be $p$ iterations for the Algorithm~\ref{Algorithm1},
$l_i=|L_i|$ for
$i=1,2,\ldots,p  \leq N$ and $h_i=H_{i1}-H_{i2}$ in the \,$i$-th iteration, where $H_{i1}$ and $H_{i2} \,(<H_{i1})$ are the first two distinct maximum edge weights of the remaining graph $G$ after \,$i-1$-th iteration. 
From the detailed complexity analysis we have,  
\[
l_1h_1+l_2h_2+ \cdots + l_ph_p=W \quad\text{and}\quad 
l_1+l_2+ \cdots + l_p=W'.
\]
%Let after multiplying by a constant weight, say $\alpha \in \mathbb{N}$, to all the edges of the original graph $G$, we have the new graph $\widehat{G}$. 

Observe that for the new graph $\widehat{G}$, the number of iterations in Algorithm \textsc{Wt-Mwbm}($\widehat{G}$) still remains $p$ and in the computation of \textsc{Wt-Mwbm}($\widehat{G}$), $L_i$ consists of $l_i$ number of edges  of the remaining graph $\widehat{G}$ (after \,$i-1$-th iteration), whose weights
reduce in $\widehat{G}_h^\Delta$ in $i$-th iteration of \textsc{Wt-Mwbm}(\,). 
%(same edges as in the computation of \textsc{Wt-Mwbm}(${G}$))
In this case, if $h_i'=H'_{i1}-H'_{i2}$ in the $i$-th iteration, where $H'_{i1}$ and $H'_{i2} \;(<H'_{i1})$ are the first two distinct maximum edge weights of the remaining graph $\widehat{G}$ after \,$i-1$-th iteration, respectively, then
$$h_i'=H'_{i1}-H'_{i2}=\alpha*H_{i1} - \alpha*H_{i2}=
%\alpha*h_i 
\alpha h_i \quad\text{where}\quad i=1,2,\ldots,p, $$ 
$$
l_1h'_1+l_2h'_2+ \cdots + l_ph'_p
= l_1 \alpha h_1+l_2 \alpha h_2+ \cdots + l_p \alpha h_p
=\alpha W, $$
%\quad\text{and}\quad
and
$$ 
l_1+l_2+ \cdots + l_p=W'.
$$
Therefore, the modified Algorithm~\ref{Algorithm1} will take $O(\sqrt{|V|}W'/k(|V|,W'/{N}))$ time to compute the weight of a
$\textit{mwm}(\widehat{G})$ by using the algorithm by Feder and Motwani~\cite{feder95} as a subroutine; whereas time required for the Kao et. al.'s Algorithm~\ref{Algorithm0_Kao} %requires
is  
$O(\sqrt{|V|}\alpha W/k(|V|,\alpha W/N))$ time. 
\end{proof}

That is, multiplication by an integer constant to all the weight of edges of a weighted bipartite graph  $G$ does not affect the time complexity of the modified decomposition algorithm for computing the weight of a MWBM of the bipartite graph. The following remark talks about a conditional scaling down property of the algorithm for  the graph $G$.

\begin{remark}[Multiplicative Scaling Down Property]
Let we scale down each edge weights of 
%the weighted  bipartite graph 
$G$ by multiplying a factor of \,$\frac{1}{\alpha}$\, and get a new graph $\widehat{G}$, where $\alpha$ is the Greatest Common Divisor (GCD) of the positive edge weights of $G$. Then the time complexity of the Algorithm~\ref{Algorithm1} for computing a MWBM of both the graphs $G$ and $\widehat{G}$ remains same.
\end{remark}
%
%\begin{proposition}[Additive Scaling Property]
%\label{Lm:A_Incremental Wt}
%Let a new bipartite graph $\widehat{G}$  is constructed by adding a large constant $\alpha \in \mathbb{N} $ to each non-negative weight $w_i$ of $G$. 
%%$G = (V, E, \textit{Wt})$.
%Then for the graph $\widehat{G}$ the complexity of the Algorithm~\ref{Algorithm1} remains unchanged as $O(\sqrt{|V|}W'/k(|V|,W'/{N}))$ where $|E| \leq W' \leq W$; whereas the complexity of the Algorithm~\ref{Algorithm0_Kao} becomes $O(\sqrt{|V|}\alpha W/k(|V|,\alpha W/{N}))$.
%\end{proposition}
%

Though during the complexity calculation of Algorithm~\ref{Algorithm1} we have stated a bound for $W'$ as: $|E| \leq W' \leq W$, but the following proposition gives a more better bound of the parameter $W'$.
\begin{proposition}[GCD Property]
\label{GCD_Property}
Let $G = (V, E, \textit{Wt})$ be an undirected, weighted bipartite graph and $E=\{e_1,e_2,\ldots,e_{|E|}\}$ be the set of positive weight edges with weights $\textit{Wt}(e_i)=w_i >0$ for $1 \leq i \leq |E|$. 
Further, let the GCD of the positive edge weights of $G$ is denoted by $ \textit{GCD}(w_1,w_2,\ldots,w_{|E|})$,
%If $g$ denotes the %Greatest Common Divisor (GCD) 
%GCD of the edge weights of $G$, that is,
%$g=$ GCD$(w_1,w_2,\ldots,w_{|E|})$, 
 then
$$|E| \leq W' \leq \frac{W}{ \textit{GCD}(w_1,w_2,\ldots,w_{|E|})} \leq W.$$
\end{proposition}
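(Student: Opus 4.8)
The plan is to show that the greatest common divisor $g := \textit{GCD}(w_1,w_2,\ldots,w_{|E|})$ remains a common divisor of all active edge weights throughout every iteration of \textsc{Wt-Mwbm}. Once this invariant is in place, each step size $h_i$ is forced to be a positive multiple of $g$, hence $h_i \ge g$, and the complexity identity $l_1h_1+\cdots+l_ph_p=W$ from Section~\ref{mwbm:Complexity_analysis} immediately yields the desired bound $W' \le W/g$. So the whole proposition reduces to establishing this divisibility invariant.

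I would prove the invariant by induction on the iteration index. The base case is immediate, since by definition $g$ divides every initial weight $w_1,\ldots,w_{|E|}$. For the inductive step, assume that at the start of the $i$-th iteration every active edge weight of the current working graph is a multiple of $g$. Because $H_{i1}$ and $H_{i2}$ are themselves edge weights of this graph, $h_i=H_{i1}-H_{i2}$ is a multiple of $g$; being a positive integer, it satisfies $h_i \ge g$. It then remains to show that the reduced weights defining $G_{h_i}^{\Delta}$ are again multiples of $g$, so the invariant survives into the next iteration.

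This last point is the crux and the place that needs care. By the formation rule with $h=h_i=H_{i1}-H_{i2}$, and by exactly the argument used in the proof of Theorem~\ref{DecompositionTh_Modified}(\ref{thm:dt:a}), the graph $G_{h_i}$ contains only the heaviest edges of the current graph, each carrying the uniform weight $h_i$. For a bipartite graph all of whose edge weights equal $h_i$, a minimum weight cover $C_{h_i}$ may be taken with values in $\{0,h_i\}$ — it is $h_i$ times a minimum cardinality vertex cover, by König's theorem — so $C_{h_i}(u),C_{h_i}(v)\in\{0,h_i\}$ are multiples of $g$. Consequently, for every edge $\{u,v\}$ surviving into $G_{h_i}^{\Delta}$, its new weight $\textit{Wt}(u,v)-C_{h_i}(u)-C_{h_i}(v)$ is a difference of multiples of $g$ and hence a multiple of $g$, closing the induction.

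Finally I would assemble the three inequalities. Combining $h_i\ge g$ for all $i$ with the identity $l_1h_1+\cdots+l_ph_p=W$ gives $W=\sum_{i=1}^{p} l_ih_i \ge g\sum_{i=1}^{p} l_i = gW'$, that is, $W'\le W/g$. The lower bound $|E|\le W'$ holds because every edge must eventually have its weight driven to zero and so is counted in at least one $L_i$, as already noted in Section~\ref{mwbm:Complexity_analysis}; and $W/g\le W$ holds simply because $g\ge 1$. I expect the only genuinely delicate step to be the claim that the minimum weight cover of the uniform-weight graph $G_{h_i}$ can be chosen $\{0,h_i\}$-valued (so that the cover contributions stay divisible by $g$); the rest is bookkeeping around the already-established identity $\sum_i l_ih_i=W$.
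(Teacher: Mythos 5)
Your proof is correct and follows the same core route as the paper's: both rest on the identity $l_1h_1+\cdots+l_ph_p=W$ together with the observation that each $h_i=H_{i1}-H_{i2}$ is a positive multiple of $g=\textit{GCD}(w_1,\ldots,w_{|E|})$, whence $W=\sum_{i=1}^{p} l_ih_i\geq g\sum_{i=1}^{p} l_i=gW'$, i.e.\ $W'\leq W/g$. The genuine difference is that the paper merely \emph{asserts} that $H_{i1}$ and $H_{i2}$ are divisible by $g$ in every iteration, whereas you prove it by induction and, in doing so, isolate the one step that actually needs an argument: the weights in $G_{h_i}^{\Delta}$ have the form $\textit{Wt}(u,v)-C_{h_i}(u)-C_{h_i}(v)$, so divisibility by $g$ survives an iteration only if the cover values are themselves multiples of $g$. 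Your appeal to K\"onig's theorem --- since $G_{h_i}$ has uniform edge weight $h_i$, a minimum weight cover may be taken as $h_i$ times a minimum-cardinality vertex cover, hence $\{0,h_i\}$-valued --- supplies exactly this, and it is not a pedantic point: an arbitrary minimum weight cover can break the invariant (for a single edge of weight $h$, the cover $C(u)=1$, $C(v)=h-1$ is minimum but its values need not be multiples of $g$ when $g=h>1$), so the proposition tacitly depends on the algorithm computing $C_h$ from the matching in this canonical form, as the complexity analysis indicates it does. Your treatment of the outer bounds ($|E|\leq W'$ because every edge must enter some $L_i$ before its weight reaches zero, and $W/g\leq W$ since $g\geq 1$) matches the paper's bookkeeping; in short, same approach, but your version closes a divisibility gap the paper leaves implicit.
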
 
\begin{proof}
%From the complexity analysis, we know that 
Without going into more detailed and repeated writing, as mentioned % by using the similar proof construction of the 
in the previous Proposition~\ref{Lm:M_Incremental Wt}, we have:  
\begin{align*}
& l_1h_1+l_2h_2+ \cdots + l_ph_p=W, \text{~where~} h_i=H_{i1}-H_{i2} 
\quad\text{and}\quad \\
& l_1+l_2+ \cdots + l_p=W'.
\end{align*}

Let $g=$ GCD$(w_1,w_2,\ldots,w_{|E|})$. 
Observer that, in any iteration $i$ (where $i=1,2,\ldots,p$) both $H_{i1}$ and $H_{i2}$ are divisible by $g$. Hence, according to the definition of $h_i$s,  each $h_i=H_{i1}-H_{i2} $ is also divisible by the factor $g$.
\begin{align*}
\therefore~~W' 
& =    l_1+l_2+ \cdots + l_p  \\
& \leq l_1(h_1/g)+l_2(h_2/g)+ \cdots + l_p(h_p/g) \\
& \leq (l_1h_1+l_2h_2+ \cdots + l_ph_p)/g = \frac{W}{g} \leq W.%\\
%\rightarrow 
\end{align*}
This completes the proof.
\end{proof}

%%%%%%%%%%%%%%%%%%%%%%%%%%%%%%%%%%%%%%%%%%%
%\begin{theorem}
%\label{Th:Rational Wt}
%The modified decomposition Algorithm~\ref{Algorithm1}, 
%%presented in Chapter~\ref{Ch:mwbm}, 
%for computing the weight of a MWBM is also applicable for an undirected bipartite  graph with %non-negative 
%positive rational edge weights.
%\end{theorem}
%\begin{proof}
%Let $\textit{Wt} \cln  E \rightarrow \mathbb{Q}^+$ and %$\nrat, \textit{Wt(E)}=
%$\{w_1,w_2, \ldots, w_{|E|}\}$ be the set of rational weights of edges of a bipartite graph $G=(V = V_1 \cup V_2, E, \textit{Wt})$. The set of weights can be re-written as $\{\frac{w_{11}}{w_{12}}, \frac{w_{21}}{w_{22}}, \ldots, \frac{w_{|E|1}}{w_{|E|2}} \}$ where each $w_{ij} \in \mathbb{N}$.
%Let $\alpha=$ LCM$(w_{12},w_{22},\ldots, w_{|E|2})$, that is, the Lowest Common Multiple (LCM) of the denominators.
%Multiplying by $\alpha$, 
%%$\textit{Wt(E)}$ 
%the set $\{w_1,w_2, \ldots, w_{|E|}\}$ can be converted into a set of %non-negative 
%positive integers. Hence, we can use Algorithm~\ref{Algorithm1} 
%%of Chapter~\ref{Ch:mwbm} 
%to compute the weight of a  MWBM of $G$.
%%and according to the Lemma~\ref{Lm:Incremental Wt}, it is not going to affect its running time complexity. 
%Let $W=\sum_{i=1}^{|E|} w_i$. In worst case the Algorithm~\ref{Algorithm1} runs in $O(\alpha W \sqrt{|V|})$, which is $O(W \sqrt{|V|})$ if we assume that $\alpha$ is a bounded constant.
%\end{proof}
%%%%%%%%%%%%%%%%%%%%%%%%%%%%%%%%%%%%%%%%%%

\subsection{Complexity Analysis by Considering Other Base Algorithms}
We also analyze the complexity of the Algorithm \ref{Algorithm1} by
considering the Hopcroft-Karp algorithm~\cite{hopcroft73} and
Alt-Blum-Mehlhorn-Paul algorithm~\cite{alt91} as base algorithms.
\begin{description}
\item[With Respect to the Hopcroft-Karp Algorithm:]
Hopcroft-Karp algorithm~\cite{hopcroft73} presents the best known worst-case performance
for getting a maximum matching in a bipartite graph with runtime of
$O(\sqrt{|V|}|E|)$. Hence the recurrence relation for running time of
the Algorithm~\ref{Algorithm1} with respect to Hopcroft-Karp algorithm is
$$
\begin{array}{ll}
& T(|V|,W',{N})= O(\sqrt{|V|}l_1) + T(|V|,W'',N'') \\[4pt] 
% T(|V|,W-|E|*w_1,N-w_1)
 \mbox{and}\quad &T(|V|,0,0)=0
\end{array}
$$
$$
\begin{array}{ll}
\therefore ~T(|V|,W',{N}) & = O(\sqrt{|V|}l_1) + O(\sqrt{|V|}l_2) + \cdots + O(\sqrt{|V|}l_p)\\[4pt]
 & = O\left(\sqrt{|V|}\sum\limits_{i=1}^p l_i\right) 
   = O(\sqrt{|V|}W').\\
\end{array}
$$

\item[With Respect to the Alt-Blum-Mehlhorn-Paul Algorithm:]
A bit better algorithm for dense bipartite graph is
Alt-Blum-Mehlhorn-Paul algorithm~\cite{alt91} which is 
$(\log |V|)^{1/2}$-factor faster
%\footnote{For dense graphs
%($|E|=O(|V|^2)$) this improves on the
%$O(|E|\sqrt{|V|})=O(|V|^{2.5})$ time algorithm of Hopcroft and Karp. The speed up
%of this algorithm is obtained by an application of the fast adjacency matrix scanning
%technique of Cheriyan, Hagerup and Mehlhorn. It has time-complexity $O(|V|^{1.5}\sqrt{ |E|/ \log |V|})$ $=O(|V|(|V|/\log |V|)^{1/2}|E|^{1/2})$ 
%$=O((|V|/\log |V|)^{1/2}|E|)$
%and hence is $(\log |V|)^{1/2}$ factor faster than
%Hopcroft-Karp algorithm.}
than Hopcroft-Karp algorithm for maximum
bipartite matching. Hence the time complexity, with respect to
Alt-Blum-Mehlhorn-Paul algorithm as a base algorithm, is
$O((|V|/\log |V|)^{1/2}W')$ and it is $(\log |V|)^{1/2}$-factor faster than the above case.

\end{description}

\section{Finding a Maximum Weight Matching} \label{Find_MWC}
%Algorithm for Minimum Weight Cover
The Algorithm~\ref{Algorithm1} computes
only the weight of a $\textit{mwm}(G)$ of a given graph $G$. To find the edges of a $\textit{mwm}(G)$, we %first 
give a revised
% $O(\sqrt{n}W'/k(n,W'/{m'}))$-time
algorithm for constructing a Minimum Weight Cover (MWC) of $G$ which is a dual of maximum
weight matching.
% \cite{kao02}.
As mentioned before, a \emph{cover} of $G$ is a function $C \cln V_1 \cup V_2 \rightarrow \nint$ such that
$C(v_1)+C(v_2) \geq \textit{Wt}(v_1,v_2) ~~ \forall v_1 \in V_1 \mbox{ and } v_2\in
V_2$. Let $\textit{Wt}(C)= \sum_{x\in V_1 \cup V_2} C(x)$. We say $C$ is \emph{minimum
weight cover} if $\textit{Wt}(C)$ is minimum. Let $C$ be a MWC of a graph $G$.

\begin{lemma}[\!\!\cite{kao02}] \label{Lemma_MWC}
Let $C_h^\Delta$ be any minimum weight cover of $G_h^\Delta$. If $C$ is
a function on $V(G)$ such that for every $u \in V(G)$, $C(u)=C_h(u) +
C_h^\Delta(u) $, then $C$ is minimum weight cover of $G$.
\end{lemma}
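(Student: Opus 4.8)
The plan is to verify the two defining properties of a minimum weight cover in turn: first that $C$ is a \emph{feasible} cover of $G$, and then that its weight $\textit{Wt}(C)$ attains the smallest possible value. Throughout I would extend both $C_h$ and $C_h^\Delta$ by zero to any vertex of $V(G)$ that is isolated in the corresponding subgraph, so that $C(u)=C_h(u)+C_h^\Delta(u)$ is well defined on all of $V(G)$; this is legitimate because covers map into $\nint$ and so are nonnegative.

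For feasibility I would fix an arbitrary edge $\{v_1,v_2\}$ of $G$ with $\textit{Wt}(v_1,v_2)=w$ and split into two cases according to whether the edge survives into $G_h^\Delta$. If the edge lies in $G_h^\Delta$, then by the construction of $G_h^\Delta$ its weight there is exactly $w-C_h(v_1)-C_h(v_2)$, and since $C_h^\Delta$ covers $G_h^\Delta$ we have $C_h^\Delta(v_1)+C_h^\Delta(v_2)\geq w-C_h(v_1)-C_h(v_2)$; rearranging yields $C(v_1)+C(v_2)\geq w$. If instead the edge is absent from $G_h^\Delta$, then the defining condition fails, i.e.\ $w-C_h(v_1)-C_h(v_2)\leq 0$, which already gives $C_h(v_1)+C_h(v_2)\geq w$, and adding the nonnegative quantities $C_h^\Delta(v_1),C_h^\Delta(v_2)$ preserves $C(v_1)+C(v_2)\geq w$. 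In both cases $C$ satisfies the cover inequality.

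For minimality I would invoke the duality between maximum weight matching and minimum weight cover in bipartite graphs, namely $\textit{Wt}(\textit{mwm}(G'))=\min_{C'}\textit{Wt}(C')$ for any bipartite $G'$. Weak duality already shows every cover of $G$ has weight at least $\textit{Wt}(\textit{mwm}(G))$, so it suffices to show that $C$ meets this bound. Because $C_h$ and $C_h^\Delta$ are minimum weight covers of $G_h$ and $G_h^\Delta$ respectively, duality applied in each subgraph gives $\textit{Wt}(C_h)=\textit{Wt}(\textit{mwm}(G_h))$ and $\textit{Wt}(C_h^\Delta)=\textit{Wt}(\textit{mwm}(G_h^\Delta))$. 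Summing over all vertices,
\[
\textit{Wt}(C)=\sum_{u}\bigl(C_h(u)+C_h^\Delta(u)\bigr)=\textit{Wt}(C_h)+\textit{Wt}(C_h^\Delta)=\textit{Wt}(\textit{mwm}(G_h))+\textit{Wt}(\textit{mwm}(G_h^\Delta)).
\]
Theorem~\ref{DecompositionTh}(a) identifies the right-hand side with $\textit{Wt}(\textit{mwm}(G))$, so $\textit{Wt}(C)=\textit{Wt}(\textit{mwm}(G))$. Together with feasibility, $C$ is a cover of $G$ achieving the weak-duality lower bound, hence a minimum weight cover.

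The two-case feasibility check and the bookkeeping of the zero-extension are routine. The one conceptual point is the additive split $\textit{Wt}(C)=\textit{Wt}(C_h)+\textit{Wt}(C_h^\Delta)$ followed by matching it to the matching-weight decomposition. The main obstacle I anticipate is keeping the reduced edge weights of $G_h$ straight: in $G_h$ each edge weighs $\textit{Wt}(u,v)-(N-h)$ rather than $\textit{Wt}(u,v)$, so the duality equality $\textit{Wt}(C_h)=\textit{Wt}(\textit{mwm}(G_h))$ must be read with respect to $G_h$'s own weights, and the feasibility argument must use precisely the $G_h^\Delta$-weight $w-C_h(v_1)-C_h(v_2)$ so that the two subproblems recombine correctly to the original weight $w$.
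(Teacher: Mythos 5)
Your proof is correct, and it is worth noting that this paper contains no proof of the lemma at all: the statement is imported verbatim from Kao et al.~\cite{kao02}, and the text around it only invokes it for the correctness of the cover-construction algorithm. So the comparison is really with the original argument in~\cite{kao02}, and there the logical order is the reverse of yours: the cover lemma is the primitive result --- feasibility is shown by exactly your two-case split on whether an edge survives into $G_h^\Delta$, and minimality by a weight computation combined with K\H{o}nig--Egerv\'ary duality --- and the decomposition identity $\textit{Wt}(\textit{mwm}(G))=\textit{Wt}(\textit{mwm}(G_h))+\textit{Wt}(\textit{mwm}(G_h^\Delta))$ of Theorem~\ref{DecompositionTh}(a) is then read off from $\textit{Wt}(C)=\textit{Wt}(C_h)+\textit{Wt}(C_h^\Delta)$. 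You instead take Theorem~\ref{DecompositionTh}(a) as given and combine it with strong duality in the two subgraphs and weak duality in $G$ to get minimality. Within this paper that is legitimate and non-circular, since Theorem~\ref{DecompositionTh}(a) is itself quoted from~\cite{kao02} without proof; but be aware that in a self-contained development your route would be circular, because the natural proof of Theorem~\ref{DecompositionTh}(a) passes through this very lemma --- what your argument buys is brevity given the theorem, what the original buys is that it proves the theorem too. Your bookkeeping is also the right bookkeeping: the zero-extension of $C_h$ and $C_h^\Delta$ to all of $V(G)$, the observation that an edge absent from $G_h^\Delta$ satisfies $\textit{Wt}(v_1,v_2)-C_h(v_1)-C_h(v_2)\leq 0$ by construction, and the caveat that $\textit{Wt}(C_h)=\textit{Wt}(\textit{mwm}(G_h))$ must be read with respect to the reduced weights $\textit{Wt}(u,v)-(N-h)$ are precisely the points on which the original proof also turns.
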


%Below we given an algorithm for computing the same.
Using this lemma we design an $O(\sqrt{|V|}W'/k(|V|,W'/{N}))$-time
revised algorithm to compute a MWC of $G$. The correctness of this
algorithm is clear from the Lemma~\ref{Lemma_MWC} and the time
complexity analysis is similar to that given in the previous section.

\begin{algorithm}[H]
\caption{Calculate a MWC $C$ of $G$.}
\label{Algorithm2}
\begin{description}
\item[\it Input:] A weighted, undirected, complete bipartite graph $G$ with %non-negative
positive  integer weights on the edges.
% and without isolated nodes.
\item[\it Output:] A minimum weight cover $C$ of $G$.
\end{description}

\begin{tabbing}
{\textsc{Mwc}($G$)}\\
123\=\+\kill
~1: Assume that initially $\textit{Wt}(\textit{mwm}(G))=0$.\\
~2: Find $h \leftarrow H_1-H_2$ from the current working graph $G$.\\
~3: Construct $G_h$ from $G$.\\
~4: Compute $\textit{mm}(G_h)$.\\
~5: Find minimum weight cover $C_h$ of $G_h$.\\
~6: Construct $G_h^\Delta$ from $G$ and $C_h$.\\
~7: \= {\bf if} $G_h^\Delta$ is empty (that is, $G_h^\Delta$ has no active edge)\\
~8:   \>~~~ {\bf then return}   $C_h$;\\
~9:   \>{\bf else}\\%\=\+\\
10:      \>~~~ $C_h^\Delta$ $\leftarrow$  \textsc{Mwc}($G_h^\Delta$);\\
11:      \>~~~ {\bf return} $C$, where $C(u)=C_h(u)+C_h^\Delta(u)$ for all nodes $u$ in $G$.
\end{tabbing}
\end{algorithm}

Now as deduced by Kao et al.\ in~\cite{kao02}, finding a maximum weight matching by using the given vertex cover takes 
$O(\sqrt{|V|}|E|/k(|V|,|E|))$ time.
Since $|E| \leq W' \leq W,$ so altogether 
$O(\sqrt{|V|}W'/k(|V|,W'/{N}))$ time requires to find a MWBM of $G$.
%==================================================================
%------------------------------------------------------------
\section{Experimental Evaluation} \label{mwbm:Experiments}
%\todo{Updated it in Abstract, Introduction, Road-map, Conclusion}
%The motivation of this section is to evaluate the efficiency of the Algorithm~\ref{Algorithm1} with respect to the Kao et al.'s algorithm.
The Algorithm~\ref{Algorithm1} is efficient because of the modified decomposition Theorem~\ref{DecompositionTh_Modified}.
In order to understand the practical importance of the Algorithm~\ref{Algorithm1}, we report experimental evaluations of the same for the randomly generated weighted bipartite graphs. 
%Apart from that, we also see experimental results supporting the Remark~\ref{Remark:Incremental Weight}.

\subsection{Implementation and Experimental Environments}
We have implemented both Kao et al.'s algorithm~\cite{kao02}  and Algorithm~\ref{Algorithm1} in \verb|C++| and compiled them using \verb|g++| \textit{4.8.2-19ubuntu1} compiler.
All the experiments have been performed on a Desktop PC with an  
\textit{Intel}\textsuperscript{\textregistered} \textit{Xeon}\textsuperscript{\textregistered}(\textit{E5620 @ 2.40 GHz}) \textit{Processor}, \textit{32.00 GB RAM} and \textit{1200 GB Hard Disk}, running the \textit{Ubuntu 14.04.1 LTS} (\textit{Trusty Tahr}) \textit{64-bit Operating System}.

\subsection{Input Data Description and Its Randomness}
%Weight of a bipartite graph $G$ is the total weight of edges of $G$.
For a frame of fixed number of vertices in a partition of  the vertex set  and fixed weight of  bipartite graph $G$, we have generated the random weighted $G$ by %giving 
assigning random (uniformly distributed) weight to the randomly (uniformly distributed) picked up edges of $G$. 
%
%Let us name the input data set as $I_{i1}$ under $i$-th frame. 
%In order to maintain the unbiased experimental results for each frame $i$ (fixed number of vertices in a partition of $G$ and fixed weight of $G$), we have generated 1000 such pseudo-random inputs $I_{i1}, I_{i2}, \ldots, I_{i1000}$ for 1000 different seed values and output:
%\begin{enumerate}[(a)]
%\item the average number of iterations  
%%of \textsc{Wt-Mwbm(\;)} 
%in the respective algorithms for an input graph $G$, and
%\item on an average, total time taken by the respective algorithms to compute weight of MWBM for an input graph $G$.  
%\end{enumerate}
%
The outputs of these experiments for an input bipartite graph $G$ are:
\begin{enumerate}[(a)]
\item the number of iterations of \textsc{Wt-Mwbm}(\,) and Compute-\textsc{Mwm}(\,) in the Algorithms~\ref{Algorithm1} and~\ref{Algorithm0_Kao}, respectively, for the graph $G$ and
\item %on an average, 
total time taken by the respective algorithms to compute the weight of a MWBM of $G$.  
\end{enumerate}
%

%{\red Note that, taking average of 1000 or more such input data set is possible because we have set a frame by fixing number of vertices in a partition of $G$ and weight of $G$. }

%Maximum weight bipartite matching problem and approximate parameterized string matching 
As mentioned in~\cite{hazay07},
%Section~\ref{lit:Sec:APSM} (see page~\pageref{lit:Sec:APSM})
% 
%for a pair of equal length strings from different alphabets, the problem of APSM  for a pair of equal length strings 
%(which is known as string comparison problem)
%the APSM 
the Approximate Parameterized String Matching (APSM) problem under Hamming distance error model is computationally equivalent to the MWBM problem in graph theory. 
%due to both way reductions between the problems~\cite{hazay07}. 
%For more details refer to Section~\ref{Reduction_APSM_MWBM} (page~\pageref{Reduction_APSM_MWBM}).
% These reductions are proposed by Hazay et al.\ in.
%MWBM problem and APSM problem between two equal length patterns under hamming distance error model are closely related~\cite{hazay07} and this kind of input data frame is one such suitable place of application. 
The input data relation between the above problems are:
\begin{enumerate}[(a)]
\item length of the pattern is equal to  weight of the bipartite graph, and
\item alphabet size of the pattern is equal to number of vertices in a partition of the vertex set of the corresponding bipartite graph.  
\end{enumerate}

\subsection{Experimental Results}
We have tested the respective algorithms with large input data sets. The details are given below. In 
%all the graphs, 
each of the graphs, the output of our Algorithm~\ref{Algorithm1} (denoted in short by ``Modified Algorithm'') corresponds to the  red colored unbroken line, whereas that of for the Algorithm~\ref{Algorithm0_Kao} (denoted in short by ``Kao et al.'s Algorithm'') corresponds to the red colored dotted line.
%==========================================================
% bin/pattern-matching -a 4 -l 10 -c 1000 -i 10 -r 500 -s 1 -t 1 -w .5 -o output/101.dat
%+++++++++++++++++++++++++++++++++++++++++++++++++++++++++++
%=======================

\begin{experiment}
\label{mwbm:Exp:1}%Description of the input data is as follows.
This experiment is done for a total of $250$ pseudo-randomly generated bipartite graphs, each of its weight is fixed to $1000$ unit where size of each of the partitions of the vertex set of bipartite graph varies from~$2$ to~$26$. 
%
%+++++++++++++++++++++++++
% Please add the following required packages to your document preamble:
% \usepackage{multirow}
%05Mar.dat
\begin{table}[!htb]
\centering
{\scriptsize
\caption{Efficiency comparison between the Algorithms~\ref{Algorithm1} and~\ref{Algorithm0_Kao} for the 250 pseudo-randomly generated weighted bipartite graphs as considered in Experiment~\ref{mwbm:Exp:1}.}
%\caption{Experimental result of the 250 pseudo-randomly generated weighted bipartite graphs considered in Experiment~\ref{mwbm:Exp:1}. This table shows the improvement on the Algorithm~\ref{Algorithm1} because of the modified decomposition Theorem~\ref{DecompositionTh_Modified}.}
\label{mwbm:Table:Exp1}
\begin{tabular}{|c|c|c|c|c|c|}
\hline
\multirow{2}{*}{\textbf{\begin{tabular}[c]{@{}c@{}}\# Vertices \\ in a Partition\end{tabular}}} & \multirow{2}{*}{\textbf{\begin{tabular}[c]{@{}c@{}}Weight\\ of Graph\end{tabular}}} & \multicolumn{2}{c|}{\textbf{Algorithm~\ref{Algorithm1}}} & \multicolumn{2}{c|}{\textbf{Algorithm~\ref{Algorithm0_Kao} (by Kao et al.)}} \\ \cline{3-6} 
 &  & \textbf{\# Iterations} & \textbf{Time (Sec.)} & \textbf{\# Iterations} & \textbf{Time (Sec.)} \\ \hline
%1 & 1000 & 1.00 & 0.000015 & 279.00 & 0.000514 \\ \hline
2 & 1000 & 8.40 & 0.000050 & 422.60 & 0.002356 \\ \hline
3 & 1000 & 8.20 & 0.000084 & 397.60 & 0.003769 \\ \hline
4 & 1000 & 6.60 & 0.000104 & 402.20 & 0.005922 \\ \hline
5 & 1000 & 7.20 & 0.000146 & 444.40 & 0.009566 \\ \hline
6 & 1000 & 8.60 & 0.000214 & 408.60 & 0.010882 \\ \hline
7 & 1000 & 7.00 & 0.000251 & 418.60 & 0.014451 \\ \hline
8 & 1000 & 10.40 & 0.000405 & 455.20 & 0.019964 \\ \hline
9 & 1000 & 7.20 & 0.000376 & 366.40 & 0.018605 \\ \hline
10 & 1000 & 7.40 & 0.000411 & 400.20 & 0.024284 \\ \hline
11 & 1000 & 8.60 & 0.000525 & 391.20 & 0.025592 \\ \hline
12 & 1000 & 9.00 & 0.000670 & 391.20 & 0.031218 \\ \hline
13 & 1000 & 8.40 & 0.000722 & 391.20 & 0.035947 \\ \hline
14 & 1000 & 8.40 & 0.000811 & 391.20 & 0.039951 \\ \hline
15 & 1000 & 10.20 & 0.001089 & 391.20 & 0.044228 \\ \hline
16 & 1000 & 9.60 & 0.001156 & 391.20 & 0.048768 \\ \hline
17 & 1000 & 9.40 & 0.001217 & 391.20 & 0.053661 \\ \hline
18 & 1000 & 9.60 & 0.001411 & 391.20 & 0.058682 \\ \hline
19 & 1000 & 11.00 & 0.001680 & 391.20 & 0.064456 \\ \hline
20 & 1000 & 8.60 & 0.001492 & 391.20 & 0.070000 \\ \hline
21 & 1000 & 11.80 & 0.002226 & 391.20 & 0.077403 \\ \hline
22 & 1000 & 11.00 & 0.002284 & 391.20 & 0.083169 \\ \hline
23 & 1000 & 13.40 & 0.002924 & 391.20 & 0.089702 \\ \hline
24 & 1000 & 13.80 & 0.003254 & 391.20 & 0.097006 \\ \hline
25 & 1000 & 13.00 & 0.003327 & 391.20 & 0.103550 \\ \hline
26 & 1000 & 12.20 & 0.003342 & 391.20 & 0.110369 \\ \hline
\end{tabular}
}
\end{table}
%=======================

Each numerical row of the Table~\ref{mwbm:Table:Exp1} is corresponding to 10  different random graphs, each of whose size of each partition of the vertex set and weight of the graphs are fixed.
Only for this experimental result, each row reports the average output  of $10$ different random graphs, each of whose number of vertices and weight are fixed.

%To explain further consider 

For example, the numerical row corresponding to `$\#$\,Vertices in a partition' equal to $15$ reports the following. For the $10$ randomly generated different bipartite graphs, each of whose size of the vertex set is $30$ and weight is $1000$ unit. And on an average the number of iterations of \textsc{Wt-Mwbm}(\,) and Compute-\textsc{Mwm}(\,) in the Algorithms~\ref{Algorithm1} and~\ref{Algorithm0_Kao} are $10.20$ and $391.20$, respectively; whereas average time taken by the respective algorithms to compute the weight of a MWBM are $0.001089$ and $0.044228$ seconds.  

%=======================
\begin{figure}[!htpb]%[htpb]
\centering
%\vspace*{-1cm}
\includegraphics[angle=-90,width=.965\textwidth]{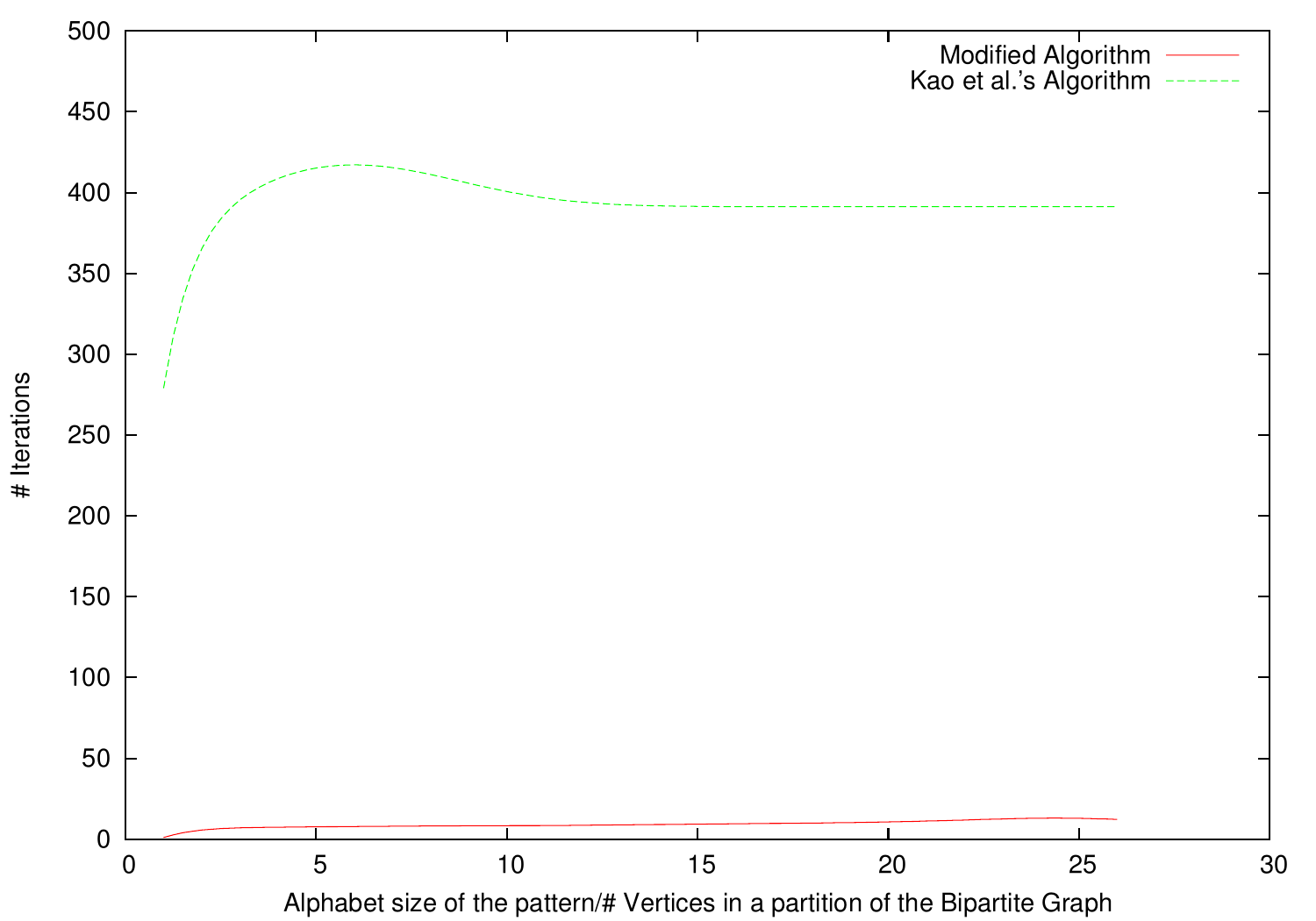}
\caption{Partition size vs.\ Iteration graph corresponding to the Experiment~\ref{mwbm:Exp:1}. Weight of each input graph is fixed to be 1000 unit.}
\label{mwbm:Fig:Exp1:Iteration}
\end{figure}
%=======================
\begin{figure}[!htpb]%[htpb]
\centering
%\vspace*{-1cm}
\includegraphics[angle=-90,width=.965\textwidth]{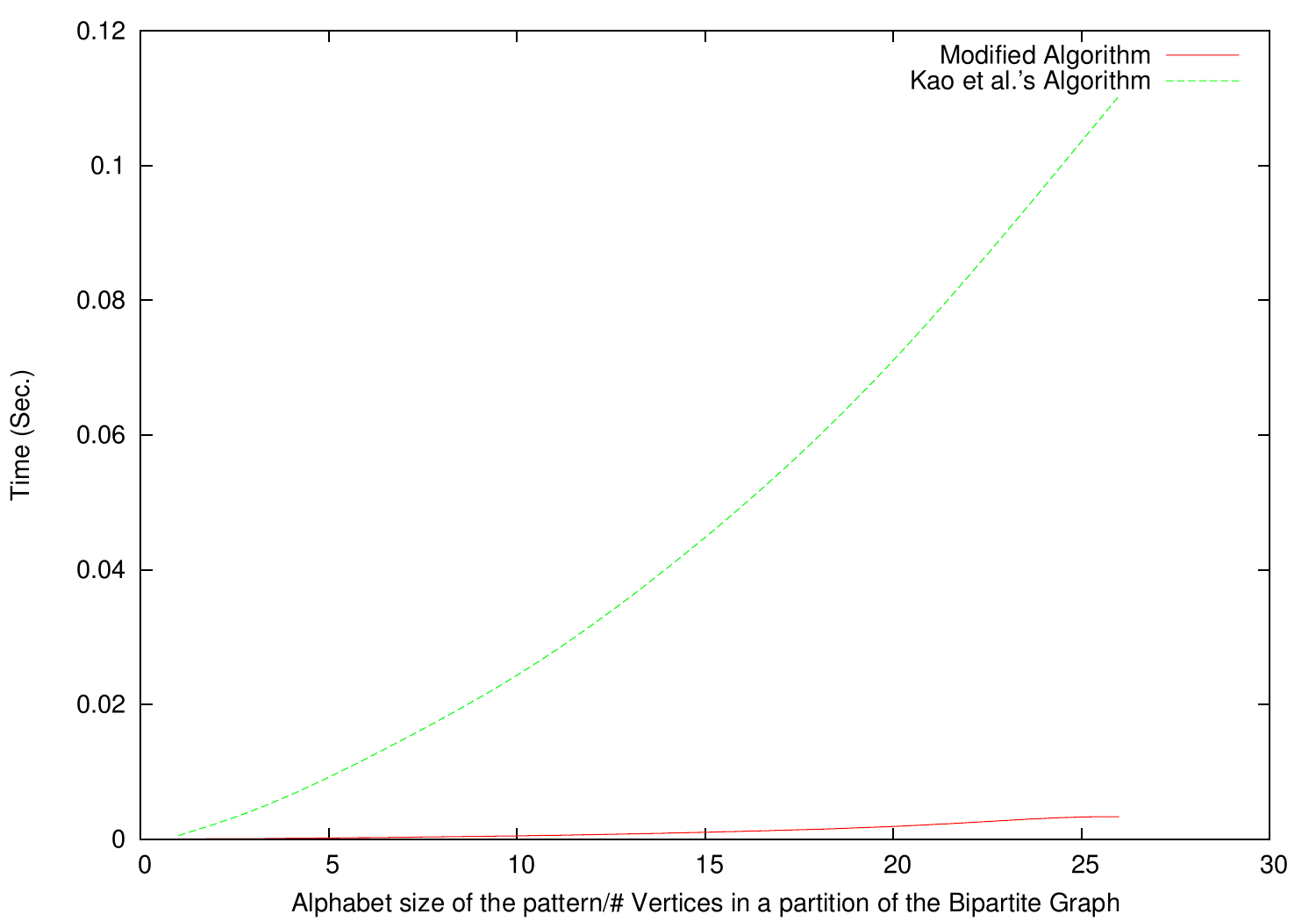}%_dat-alpha-vs-iter-bezier
\caption{Partition size vs.\ Time graph corresponding to the Experiment~\ref{mwbm:Exp:1}. Weight of each input graph is fixed to be 1000 unit.}
\label{mwbm:Fig:Exp1:Time}
\end{figure}
%=======================
%\vfill{}

Figure~\ref{mwbm:Fig:Exp1:Iteration} shows the comparison on the number of iterations for the random graphs with different size partition of the vertices for the Algorithms~\ref{Algorithm1} and~\ref{Algorithm0_Kao}. Similarly, Figure~\ref{mwbm:Fig:Exp1:Time} gives the comparison time taken by the same algorithms for the random graphs with different size partition of the vertices. 
%
%Both the Figures~\ref{mwbm:Fig:Exp1:Iteration} and~\ref{mwbm:Fig:Exp1:Time} give compares between the Algorithms~\ref{Algorithm1} and~\ref{Algorithm0_Kao}
\end{experiment}
%+++++++++++++++++++++++++++++++++++++++++++++++++++++++++++
\newpage
The next two experiments are done over the  graphs corresponding to the randomly generated 
%by keeping in mind about 
 strings over the DNA alphabet $\varSigma=\{A,C,G,T\}$ of different lengths.
%

%++++++++++++++++++++++++++++++++++++++++++++++++++++++++++
\begin{experiment}
\label{mwbm:Exp:2}
In this experiment we have fixed the size of each partition of each graph to $4$ and randomly generated a total of $62$ bipartite graphs for $62$ different weights. See Table~\ref{mwbm:Table:Exp2} for more details. Unlike previous experiment, each row reports the iterations and time comparison of the Algorithms~\ref{Algorithm1} and~\ref{Algorithm0_Kao} on a randomly generated bipartite graph with fixed size vertex and weight. 
Figures~\ref{mwbm:Fig:Exp2:Iteration} and~\ref{mwbm:Fig:Exp2:Time} describe the pictorial representation of the Table~\ref{mwbm:Table:Exp2}.
%====================================
%
% 17A_3.dat
\begin{table}[htpb]
\centering
{\scriptsize
%\tiny
\caption
[Experimental result for the 62 pseudo-randomly generated weighted bipartite graphs as considered in Experiment~\ref{mwbm:Exp:2}]
{Experimental result for the 62 pseudo-randomly generated weighted bipartite graphs as considered in Experiment~\ref{mwbm:Exp:2}. The number of vertices in each partition of the vertex set of each of the graphs is fixed to be~4, but weight of the graph varies.}
\label{mwbm:Table:Exp2}
\begin{tabular}{|c|c|c|c|c|c|}
\hline
\multirow{2}{*}{\textbf{\begin{tabular}[c]{@{}c@{}}\# Vertices \\in a Partition\end{tabular}}} & \multirow{2}{*}{\textbf{\begin{tabular}[c]{@{}c@{}}Weight\\ of Graph\end{tabular}}} & \multicolumn{2}{c|}{\textbf{Algorithm~\ref{Algorithm1}}} & \multicolumn{2}{c|}{\textbf{Algorithm~\ref{Algorithm0_Kao} (by Kao et al.)}} \\ \cline{3-6} 
 &  & \textbf{\# Iterations} & \textbf{Time (Sec.)} & \textbf{\# Iterations} & \textbf{Time (Sec.)} \\ \hline
4 & 10   & 3.00  & 0.000121 & 5.00    & 0.000152 \\ \hline
4 & 50   & 4.00  & 0.000109 & 13.00   & 0.000229 \\ \hline
4 & 100  & 9.00  & 0.000260 & 38.00   & 0.000965 \\ \hline
4 & 150  & 6.00  & 0.000165 & 52.00   & 0.001257 \\ \hline
4 & 200  & 5.00  & 0.000141 & 81.00   & 0.001641 \\ \hline
4 & 250  & 5.00  & 0.000142 & 109.00  & 0.002703 \\ \hline
4 & 300  & 18.00 & 0.000419 & 133.00  & 0.003255 \\ \hline
4 & 350  & 5.00  & 0.000144 & 116.00  & 0.002648 \\ \hline
4 & 400  & 6.00  & 0.000192 & 139.00  & 0.003666 \\ \hline
4 & 450  & 4.00  & 0.000122 & 192.00  & 0.004288 \\ \hline
4 & 500  & 31.00 & 0.000745 & 189.00  & 0.004863 \\ \hline
4 & 550  & 6.00  & 0.000165 & 203.00  & 0.004828 \\ \hline
4 & 600  & 6.00  & 0.000159 & 277.00  & 0.006564 \\ \hline
4 & 650  & 6.00  & 0.000166 & 298.00  & 0.006679 \\ \hline
4 & 700  & 9.00  & 0.000195 & 186.00  & 0.003468 \\ \hline
4 & 750  & 8.00  & 0.000222 & 268.00  & 0.007218 \\ \hline
4 & 800  & 7.00  & 0.000188 & 243.00  & 0.005828 \\ \hline
4 & 850  & 8.00  & 0.000193 & 390.00  & 0.009581 \\ \hline
4 & 900  & 8.00  & 0.000205 & 380.00  & 0.010258 \\ \hline
4 & 950  & 11.00 & 0.000297 & 395.00  & 0.011187 \\ \hline
4 & 1000 & 11.00 & 0.000248 & 368.00  & 0.009515 \\ \hline
4 & 1050 & 12.00 & 0.000230 & 466.00  & 0.012499 \\ \hline
4 & 1100 & 5.00  & 0.000135 & 535.00  & 0.012997 \\ \hline
4 & 1150 & 6.00  & 0.000126 & 405.00  & 0.008078 \\ \hline
4 & 1200 & 8.00  & 0.000205 & 397.00  & 0.008670 \\ \hline
4 & 1250 & 6.00  & 0.000168 & 602.00  & 0.014516 \\ \hline
4 & 1300 & 6.00  & 0.000168 & 648.00  & 0.016127 \\ \hline
4 & 1350 & 13.00 & 0.000283 & 553.00  & 0.015214 \\ \hline
4 & 1400 & 6.00  & 0.000158 & 639.00  & 0.016306 \\ \hline
4 & 1450 & 10.00 & 0.000242 & 426.00  & 0.009439 \\ \hline
4 & 1500 & 94.00 & 0.002554 & 456.00  & 0.011925 \\ \hline
4 & 1550 & 9.00  & 0.000267 & 591.00  & 0.017001 \\ \hline
4 & 1600 & 6.00  & 0.000162 & 775.00  & 0.016471 \\ \hline
4 & 1650 & 7.00  & 0.000189 & 676.00  & 0.015674 \\ \hline
4 & 1700 & 6.00  & 0.000162 & 665.00  & 0.013834 \\ \hline
4 & 1750 & 6.00  & 0.000162 & 860.00  & 0.021617 \\ \hline
4 & 1800 & 6.00  & 0.000163 & 701.00  & 0.017824 \\ \hline
4 & 1850 & 6.00  & 0.000150 & 777.00  & 0.016954 \\ \hline
4 & 1900 & 6.00  & 0.000142 & 827.00  & 0.019956 \\ \hline
4 & 1950 & 8.00  & 0.000194 & 788.00  & 0.018098 \\ \hline
4 & 2000 & 6.00  & 0.000168 & 690.00  & 0.016825 \\ \hline
4 & 2050 & 6.00  & 0.000175 & 584.00  & 0.011463 \\ \hline
4 & 2100 & 37.00 & 0.000864 & 727.00  & 0.016040 \\ \hline
4 & 2150 & 5.00  & 0.000139 & 585.00  & 0.009586 \\ \hline
4 & 2200 & 8.00  & 0.000186 & 747.00  & 0.016389 \\ \hline
4 & 2250 & 6.00  & 0.000175 & 897.00  & 0.022177 \\ \hline
4 & 2300 & 10.00 & 0.000252 & 1091.00 & 0.028346 \\ \hline
4 & 2350 & 6.00  & 0.000164 & 740.00  & 0.014673 \\ \hline
4 & 2400 & 8.00  & 0.000185 & 954.00  & 0.022879 \\ \hline
4 & 2450 & 6.00  & 0.000168 & 998.00  & 0.023334 \\ \hline
4 & 2500 & 18.00 & 0.000439 & 735.00  & 0.017273 \\ \hline
4 & 2550 & 5.00  & 0.000159 & 1086.00 & 0.023379 \\ \hline
4 & 2600 & 7.00  & 0.000200 & 1035.00 & 0.027333 \\ \hline
4 & 2650 & 6.00  & 0.000155 & 1313.00 & 0.032712 \\ \hline
4 & 2700 & 4.00  & 0.000133 & 1348.00 & 0.031344 \\ \hline
4 & 2750 & 10.00 & 0.000254 & 922.00  & 0.020268 \\ \hline
4 & 2800 & 9.00  & 0.000271 & 1030.00 & 0.028735 \\ \hline
4 & 2850 & 15.00 & 0.000361 & 1206.00 & 0.029624 \\ \hline
4 & 2900 & 42.00 & 0.000958 & 1144.00 & 0.026569 \\ \hline
4 & 2950 & 5.00  & 0.000153 & 1266.00 & 0.030598 \\ \hline
4 & 3000 & 6.00  & 0.000180 & 1249.00 & 0.033715 \\ \hline
4 & 3050 & 8.00  & 0.000216 & 1117.00 & 0.027538 \\ \hline
\end{tabular}
}
\end{table}
%=======================
\begin{figure}[htpb]
\centering
%\vspace*{-1cm}
\includegraphics[angle=-90,width=.97\textwidth]{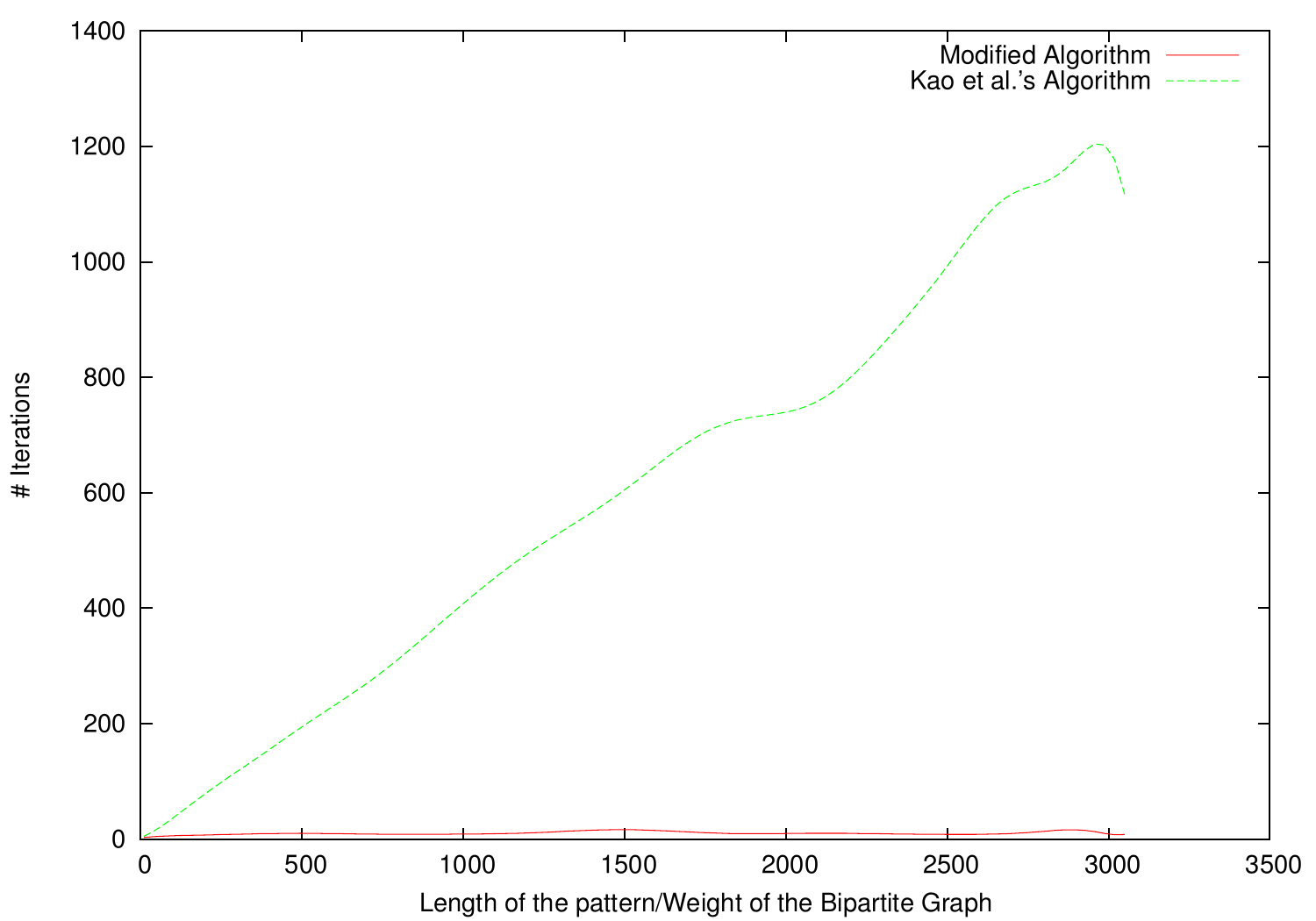}%17A_3-dat-len-vs-iter-bezier
\caption{Weight vs.\ Iteration graph corresponding to the Experiment~\ref{mwbm:Exp:2}. The number of vertices in each  partition of the vertex set is fixed to be  4.}
\label{mwbm:Fig:Exp2:Iteration}
\end{figure}
%
%=======================
\begin{figure}[htpb]
\centering
%\vspace*{-1cm}
\includegraphics[angle=-90,width=.97\textwidth]{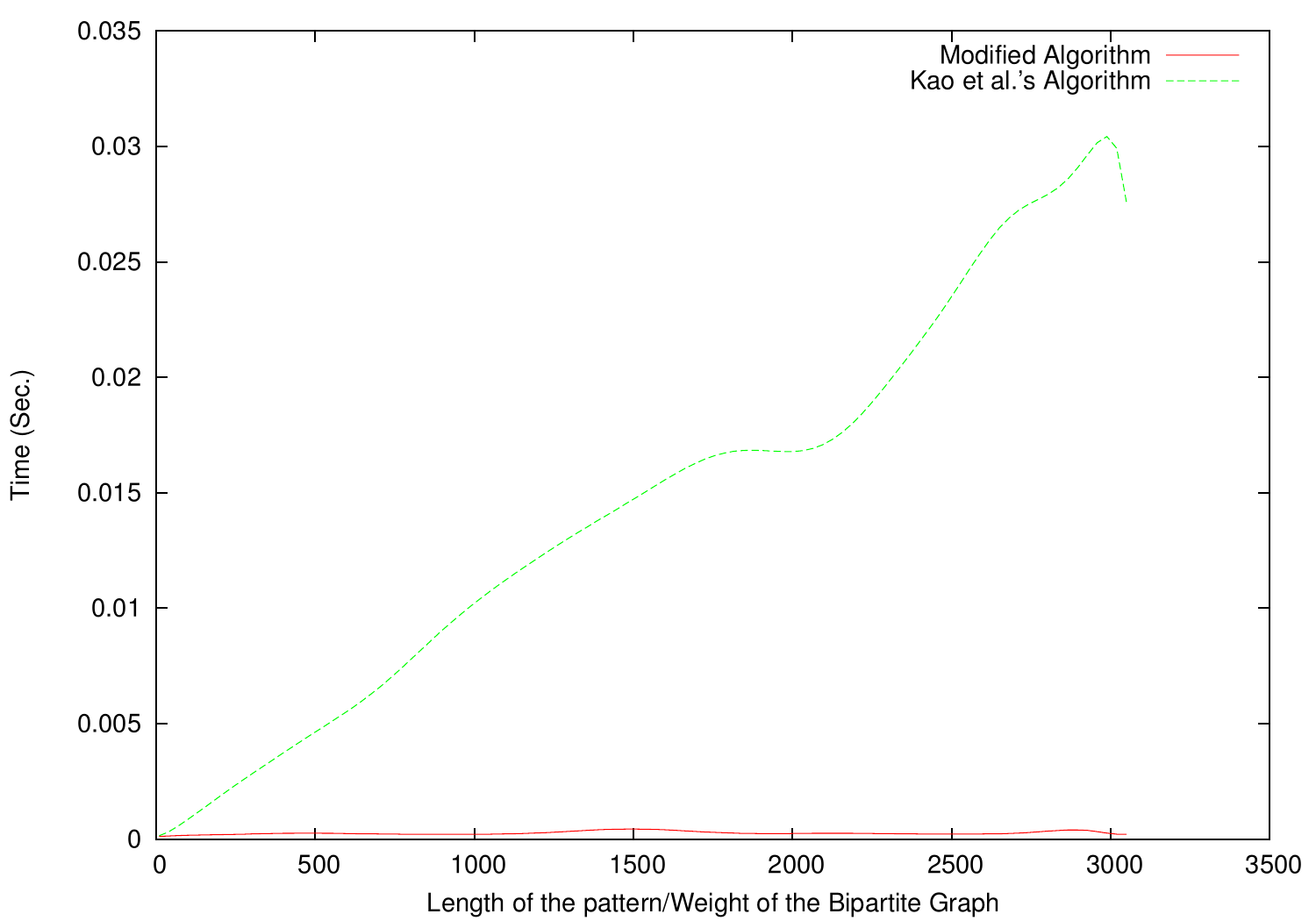}%
\caption{Weight vs.\ Time graph corresponding to the Experiment~\ref{mwbm:Exp:2}. The number of vertices in each  partition of the vertex set is fixed to be  4.}
\label{mwbm:Fig:Exp2:Time}
\end{figure}
%=======================
\end{experiment}

%+++++++++++++++++++++++++++++++++++++++++++++++++++++++++++
%\newpage
%\pagebreak
%\break
%\vfill
%\break
%\vspace{3.5cm}
\begin{experiment}
\label{mwbm:Exp:3}
In the final experiment also we have fixed the size of a partition of each graph to $4$ and but for a total of $71$ randomly generated bipartite graphs for $71$ different  and large weights. See Table~\ref{mwbm:Table:Exp3} for more details. 
%Unlike previous experiment, each row reports the iterations and time comparison of the Algorithms~\ref{Algorithm1} and~\ref{Algorithm0_Kao} for a randomly generated graphs with fixed size vertex and weight. 
%Figure~\ref{mwbm:Fig:Exp2:Iteration} and~\ref{mwbm:Fig:Exp2:Time} describe the pictorial representation of Table~\ref{mwbm:Table:Exp2}.
%
%=====================
%17A_4.dat
\begin{table}[htpb]
\centering
{\tiny
\caption
[Experimental result for the 71 pseudo-randomly generated weighted bipartite graphs as considered in Experiment~\ref{mwbm:Exp:3}]
{Experimental result for the 71 pseudo-randomly generated %weighted 
bipartite graphs as considered in Experiment~\ref{mwbm:Exp:3}. 
Cardinality of each partition of the vertex set 
%of each of the graphs 
is fixed to be~4, but weight of the graph varies largely.}
%\caption{Experimental result of 71 pseudo-randomly generated weighted bipartite graphs in Experiment~\ref{mwbm:Exp:3}.}
\label{mwbm:Table:Exp3}
\begin{tabular}{|c|c|c|c|c|c|}
\hline
\multirow{2}{*}{\textbf{\begin{tabular}[c]{@{}c@{}}\# Vertices \\ in a Partition\end{tabular}}} & \multirow{2}{*}{\textbf{\begin{tabular}[c]{@{}c@{}}Weight\\ of Graph\end{tabular}}} & \multicolumn{2}{c|}{\textbf{Algorithm~\ref{Algorithm1}}} & \multicolumn{2}{c|}{\textbf{Algorithm~\ref{Algorithm0_Kao} (by Kao et al.)}} \\ \cline{3-6} 
 &  & \textbf{\# Iterations} & \textbf{Time (Sec.)} & \textbf{\# Iterations} & \textbf{Time (Sec.)} \\ \hline
4 & 1000   & 11.00 & 0.000241 & 368.00    & 0.009156 \\ \hline
4 & 10000  & 7.00  & 0.000196 & 4284.00   & 0.009156 \\ \hline
4 & 20000  & 8.00  & 0.000193 & 7722.00   & 0.009156 \\ \hline
4 & 30000  & 6.00  & 0.000161 & 14942.00  & 0.352380 \\ \hline
4 & 40000  & 12.00 & 0.000326 & 19786.00  & 0.589580 \\ \hline
4 & 50000  & 8.00  & 0.000213 & 22172.00  & 0.619935 \\ \hline
4 & 60000  & 8.00  & 0.000207 & 28763.00  & 0.806167 \\ \hline
4 & 70000  & 8.00  & 0.000215 & 22042.00  & 0.588712 \\ \hline
4 & 80000  & 9.00  & 0.000190 & 28054.00  & 0.662195 \\ \hline
4 & 90000  & 10.00 & 0.000240 & 21440.00  & 0.447115 \\ \hline
4 & 100000 & 36.00 & 0.000970 & 32975.00  & 0.865868 \\ \hline
4 & 110000 & 10.00 & 0.000222 & 53322.00  & 1.320364 \\ \hline
4 & 120000 & 8.00  & 0.000199 & 57741.00  & 1.466250 \\ \hline
4 & 130000 & 6.00  & 0.000156 & 54970.00  & 1.209947 \\ \hline
4 & 140000 & 9.00  & 0.000204 & 50849.00  & 1.102937 \\ \hline
4 & 150000 & 12.00 & 0.000221 & 65220.00  & 1.502660 \\ \hline
4 & 160000 & 9.00  & 0.000183 & 44405.00  & 0.868532 \\ \hline
4 & 170000 & 8.00  & 0.000188 & 71264.00  & 1.668321 \\ \hline
4 & 180000 & 9.00  & 0.000212 & 68565.00  & 1.391208 \\ \hline
4 & 190000 & 7.00  & 0.000182 & 58417.00  & 1.261169 \\ \hline
4 & 200000 & 8.00  & 0.000199 & 84882.00  & 2.177765 \\ \hline
4 & 210000 & 21.00 & 0.000411 & 55160.00  & 0.861068 \\ \hline
4 & 220000 & 7.00  & 0.000171 & 90913.00  & 1.931699 \\ \hline
4 & 230000 & 10.00 & 0.000247 & 87201.00  & 2.051072 \\ \hline
4 & 240000 & 15.00 & 0.000315 & 112402.00 & 2.794717 \\ \hline
4 & 250000 & 8.00  & 0.000180 & 102078.00 & 2.032557 \\ \hline
4 & 260000 & 17.00 & 0.000353 & 105322.00 & 2.745294 \\ \hline
4 & 270000 & 10.00 & 0.000251 & 88840.00  & 2.062840 \\ \hline
4 & 280000 & 15.00 & 0.000368 & 94300.00  & 2.191243 \\ \hline
4 & 290000 & 8.00  & 0.000191 & 79909.00  & 1.639328 \\ \hline
4 & 300000 & 18.00 & 0.000364 & 105128.00 & 2.443022 \\ \hline
4 & 310000 & 8.00  & 0.000188 & 120579.00 & 2.260459 \\ \hline
4 & 320000 & 8.00  & 0.000216 & 125597.00 & 3.134282 \\ \hline
4 & 330000 & 10.00 & 0.000236 & 151182.00 & 3.940926 \\ \hline
4 & 340000 & 9.00  & 0.000196 & 166492.00 & 3.507791 \\ \hline
4 & 350000 & 7.00  & 0.000181 & 151689.00 & 3.194340 \\ \hline
4 & 360000 & 7.00  & 0.000186 & 166928.00 & 3.850720 \\ \hline
4 & 370000 & 9.00  & 0.000212 & 167154.00 & 3.666068 \\ \hline
4 & 380000 & 10.00 & 0.000162 & 147368.00 & 2.885593 \\ \hline
4 & 390000 & 21.00 & 0.000388 & 137803.00 & 2.969532 \\ \hline
4 & 400000 & 9.00  & 0.000202 & 158026.00 & 3.690540 \\ \hline
4 & 410000 & 10.00 & 0.000256 & 153238.00 & 4.177567 \\ \hline
4 & 420000 & 15.00 & 0.000350 & 168440.00 & 4.543331 \\ \hline
4 & 430000 & 8.00  & 0.000174 & 195902.00 & 4.567199 \\ \hline
4 & 440000 & 7.00  & 0.000192 & 165922.00 & 4.078337 \\ \hline
4 & 450000 & 8.00  & 0.000199 & 183992.00 & 4.580142 \\ \hline
4 & 460000 & 8.00  & 0.000190 & 208746.00 & 4.302464 \\ \hline
4 & 470000 & 14.00 & 0.000234 & 229321.00 & 5.470295 \\ \hline
4 & 480000 & 8.00  & 0.000210 & 199475.00 & 5.379294 \\ \hline
4 & 490000 & 10.00 & 0.000266 & 239623.00 & 6.374744 \\ \hline
4 & 500000 & 11.00 & 0.000238 & 186026.00 & 4.691640 \\ \hline
4 & 510000 & 12.00 & 0.000298 & 201041.00 & 4.919859 \\ \hline
4 & 520000 & 9.00  & 0.000218 & 189501.00 & 4.317819 \\ \hline
4 & 530000 & 9.00  & 0.000205 & 151069.00 & 3.144941 \\ \hline
4 & 540000 & 8.00  & 0.000219 & 230280.00 & 5.777490 \\ \hline
4 & 550000 & 7.00  & 0.000164 & 254817.00 & 5.943550 \\ \hline
4 & 560000 & 20.00 & 0.000433 & 240510.00 & 6.015034 \\ \hline
4 & 570000 & 7.00  & 0.000178 & 260619.00 & 5.542112 \\ \hline
4 & 580000 & 7.00  & 0.000177 & 230100.00 & 4.817907 \\ \hline
4 & 590000 & 11.00 & 0.000239 & 240188.00 & 5.613709 \\ \hline
4 & 600000 & 8.00  & 0.000202 & 260924.00 & 5.703650 \\ \hline
4 & 610000 & 7.00  & 0.000179 & 261019.00 & 6.318097 \\ \hline
4 & 620000 & 13.00 & 0.000295 & 281111.00 & 5.626782 \\ \hline
4 & 630000 & 9.00  & 0.000227 & 276200.00 & 5.939048 \\ \hline
4 & 640000 & 10.00 & 0.000227 & 286193.00 & 7.329997 \\ \hline
4 & 650000 & 10.00 & 0.000213 & 255659.00 & 5.345079 \\ \hline
4 & 660000 & 6.00  & 0.000165 & 321451.00 & 8.061329 \\ \hline
4 & 670000 & 7.00  & 0.000170 & 243797.00 & 5.014862 \\ \hline
4 & 680000 & 8.00  & 0.000173 & 286228.00 & 6.904159 \\ \hline
4 & 690000 & 19.00 & 0.000396 & 306680.00 & 8.027268 \\ \hline
4 & 700000 & 4.00  & 0.000134 & 330990.00 & 7.337250 \\ \hline
\end{tabular}
}
\end{table}
%==============================
\end{experiment}
\break
\section{Conclusions} 
\label{mwbm:Conclusion}
We have fine-tuned the existing decomposition theorem
originally proposed by Kao et al.\ in~\cite{kao02}, in
the context of maximum weight bipartite matching and applied it
 to design a revised version of the decomposition algorithm
to compute the weight of a maximum weight
bipartite matching in $\textit{O}(\sqrt{|V|}W'/k(|V|,W'/{N}))$ time by
employing an algorithm designed by Feder and Motwani \cite{feder95}, as
base algorithm. We have also analyzed the algorithm by using Hopcroft-Karp algorithm \cite{hopcroft73} and Alt-Blum-Mehlhorn-Paul algorithm \cite{alt91} as
base algorithms, respectively.

%Practically, it
The algorithm performs well especially when the largest edge weight differs  by
more than one from the second largest edge weight in the current working graph
during an invocation of \textsc{Wt-Mwbm}(\,) in any iteration.
Further, we have given a scaling property of the algorithm and a bound of the parameter $W'$ as $|E| \leq W' \leq \frac{W}{ \textit{GCD}(w_1,w_2,\ldots,w_{|E|})} \leq W$, where $\textit{GCD}(w_1,w_2,\ldots,w_{|E|})$ denotes the GCD of the positive edges weights $\{w_1,w_2,\ldots,w_{|E|}\}$ of the weighted bipartite graph. %$G$. 
%
% In best case $W'=O(|E|)$ and in worst
%case $W'=W,$ that is, $|E| \leq W' \leq W$. 
The algorithm works well for
general $W,$ but is the best known for $W'=o(|E| \log(|V|N))$.
The experimental study shows that performance of the modified decomposition algorithm is satisfactory.

%\begin{center}
\subsubsection*{Acknowledgement:}
%\end{center}
I am grateful to Dr. Kalpesh Kapoor for his helpful comments and suggestions. Also I thank Rahul
Kadyan, particularly for assisting me in the implementation of graph matching algorithms.
%The authors are grateful to Dr.\ Kalpesh Kapoor %(co-author of the preliminary version of this paper) 
%for his helpful comments and suggestions.\\

%=====================================================
%\begin{thebibliography}{99}
%% type your references here
%\end{thebibliography}

%% References with bibTeX database:
%%\bibliographystyle{./cls/elsarticle-num-names}
%%\bibliographystyle{./elsarticle-num}
%%\bibliographystyle{./cls/elsarticle-harv}
%%\bibliography{./section/ref}
%%\bibliography{../references}
%\bibliographystyle{abbrv}

%\bibliographystyle{splncs03}
%\bibliography{../../thesis_new/references}

%=======================================
\section{Appendix: Detailed Complexity Analysis of Algorithm~\ref{Algorithm1}}
\label{AppendixB: Detailed Complexity of MWBM}
Here we give complexity analysis of the Algorithm
\ref{Algorithm1} in general. It is almost similar as done in the paper~\cite{kao02}. We assume that a maximum heap~\cite{cormen09} is used to store the distinct edge weights along with the associated edges of~$G$.

Let the running time of \textsc{Wt-Mwbm}($G$) be $T(|V|,W',N)$ excluding the
initialization. Let $L$ be the set of the heaviest weight edges in $G$.
So up to the Step 3, construction of $G_h$ requires $O(|L|\log |E|)$ time.
The Step 4 takes $O(\sqrt{|V|} |L|/k(|V|,|L|))$ time by using Feder and
Motwani's algorithm~\cite{feder95} to compute $\textit{mm}(G_h)$. In Step 5,
$C_h$ can be found in $O(|L|)$ time from this matching. Let $L_1$ be the
set of edges of $G$ adjacent to some node $u$ with $C_h(u)>0$, i.e.,
$L_1$ consist of edges of $G$ whose weights reduce in $G_h^\Delta$. Let
$l_1=|L_1|$. Step 6 updates every edges of $L_1$ in the heap in $O(l_1 \log |E|)$ time. Since $L \subseteq L_1$, Step 1 to 6 takes
$O(\sqrt{|V|}l_1/k(|V|,l_1))$ time altogether. Let $l_i=|L_i|$ for
$i=1,2,\dots,p \leq N$ and $h_i=H_1-H_2$ for $i$-th phase of 
the recursion, where $L_i$ consists of edges of remaining $G$ whose weights 
reduce in $G_h^\Delta$ on $i$-th iteration.
 Note that, 
 $$l_1h_1+l_2h_2+ \cdots + l_ph_p=W.$$ 
 
 Let $l_1+l_2+ \cdots + l_p=W'$. Observe that if
$h_i=1$ for all $i \in [1,p]$, then $W'=\sum_{i=1}^p l_i=W$. Step 7
uses at most $T(|V|,W'',N'')$ time, where $W''\;(<W')$ is the total weight of $G_h^\Delta$ and $N''\;(<N)$ is the
maximum edge weight of $G_h^\Delta$. Hence the recurrence
relation for running time is
\begin{align*}
%\begin{array}{ll}
& T(|V|,W',{N})= O(\sqrt{|V|}l_1/k(|V|,l_1)) + T(|V|,W'',N'') \quad\mbox{and}\\[5pt]
&T(|V|,0,0)=0
%\end{array}
\end{align*}
\begin{align*}
%\begin{array}{ll}
 \therefore &~~T(|V|,W',{N}) \\[4pt]
	&~~~= O\left(\dfrac{\sqrt{|V|}l_1}{k(|V|,l_1)}\right) + O\left(\dfrac{\sqrt{|V|}l_2}{k(|V|,l_2)}\right) + \cdots + O\left(\dfrac{\sqrt{|V|}l_p}{k(|V|,l_p)}\right) \\[4pt]
	&~~~= O\left(\sqrt{|V|} \left(\dfrac{l_1}{k(|V|,l_1)}+\dfrac{l_2}{k(|V|,l_2)}+\cdots+\dfrac{l_p}{k(|V|,l_p)}\right)\right) \\[4pt]
	&~~~= O\left(\dfrac{\sqrt{|V|}}{\log |V|} \left(\log |V|^2 \sum \limits_{i=1}^p l_i - \sum \limits_{i=1}^p l_i \log l_i \right)\right)
%\end{array}
\end{align*}

Let $f(x)=x \log{x}$. Note that it is a convex function, so by Jensen's inequality\footnote{\textbf{Jensen's Inequality}~\cite{hardy34}. If
$f(x)$ is a convex function on an interval $I$ and $\mu_1,
\mu_2,\ldots,\mu_n$ are positive weights such that $\sum_{i =
0}^{n} \mu_i=1$ then $$f\left(\sum_0^n \mu_i x_i\right) \leq
\sum_{i=0}^n  \mu_i f(x_i).$$},
$$\begin{array}{l}
\sum\limits_{i=1}^p l_i \log l_i = \sum\limits_{i=1}^p f(l_i) \geq p f\left(\dfrac{\sum\limits_{i=1}^p l_i}{p}\right)
 = p f\left(\dfrac{W'}{p}\right)\\[25pt]
 = p \dfrac{W'}{p} \log \dfrac{W'}{p} = W' \log \dfrac{W'}{p} = O\left(W' \log \dfrac{W'}{N}\right)
\end{array}.$$
%
%Note that ${m'}$ is the number of distinct weights of edges of the initial graph $G$. 
This lead to the running time complexity as follows.
% In worst case ${m'}=O(m)=O(N)$ and $\sum_{p=1}^{m'}l_i = W' = W$.
$$
\begin{array}{ll}
&T(|V|,W',N) 	\\[5pt]
&~~~= O\left(\dfrac{\sqrt{|V|}}{\log |V|} \left( W'\log {|V|^2} -  W' \log \dfrac{W'}{{N}} \right)\right) \\[13pt]
&~~~= O\left(\dfrac{\sqrt{|V|}W'}{\log |V|/\log \frac{|V|^2}{W'/{N}}}\right)\\[5pt]
&~~~= O(\sqrt{|V|}W'/k(|V|,W'/{N})).
\end{array}
$$
This is better than the $O(\sqrt{|V|}W/k(|V|,W/N))$ time as
mentioned in~\cite{kao02}. 

The parameter $W'$ is smaller than $W$ which is the total weight of $G$, 
 essentially when the heaviest edge weight differs by more than one unit from the
second heaviest edge weight in a current working graph 
%in any time of %each time of  
during a decomposition in any iteration of the algorithm. 
In best case the algorithm
takes $O(\sqrt{|V|}|E|/k(|V|,|E|))$ time to compute a maximum weight matching  
and in worst case $O(\sqrt{|V|}W/k(|V|,W/N))$,
%\begin{align*}
%O(\sqrt{|V|}W'/k(|V|,W'/{m'})) &=O(\sqrt{|V|}W/k(|V|,W/|E|)) \\[2pt]
%&= O(\sqrt{|V|}W/k(|V|,W/N)),
%\end{align*}
that is,  $|E| \leq W' \leq W$.
This time complexity bridges a gap between the best known time complexity for computing a 
Maximum Cardinality Matching (MCM) of unweighted bipartite graph and that of computing a MWBM of a weighted bipartite graph.

However, it is very difficult and challenging to get rid of $W$ or $N$ from the complexity.
%The parameter $W'$ is explained in section \ref{Complexity_analysis}.
This modified algorithm works well for general $W,$ but is  best known for $W'=o(|E| \log(|V|N))$. 
%The explanation for the same is as follows.
%Recall that, $k(x,y)=\log x / \log(x^2/y)$, $N$= the largest edge weight
%of $G$, ${m'}$= number of distinct weights of edges of the initial graph
%$G$, and weight of each active edge is an integer. Since active
%edge-weights are positive integers, we have the relation ${m'} \leq |E| \leq N$ and we can write ${m'}=O(|E|)=O(N)$ in worst case. 

%Observe that, the the first
%equality (i.e. ${m'}=m$) holds if all the weights of the edges of the
%initial graph $G$ are distinct; and the equalities ${m'}=m=N$ hold if
%all the weights of the edges of the initial graph $G$ are distinct and
%weights of the edges are of the form $1,2,3,\ldots,{m'}$ i.e. they are
%in arithmetic progression with both initial term and common difference as 1.

%================================================================================================

\end{document}